\long\def\acks#1{\vskip 0.3in\noindent{\large\bf Acknowledgments}\vskip 0.2in
\noindent #1}
\newenvironment{reminder}[1]{\medskip
 
\noindent {\bf Reminder of Theorem #1.  }\em}{}
\newenvironment{reminderlemma}[1]{\medskip
 
\noindent {\bf Reminder of Lemma #1.  }\em}{}
\def \QED {\hfill{$\Box$}}
\newenvironment{proofof}[1]{\noindent {\em Proof of #1.  }}{\QED}
\def \N {{\mathbb N}}
\def \QED {\hfill{$\Box$}}
\def \poly {\textrm{poly}}
\begin{document}

\begin{frontmatter}

\title{Resolving the Complexity of Some Data Privacy Problems}
\author{Jeremiah Blocki\inst{1} \and Ryan Williams\inst{2}}
\institute{Carnegie Mellon University \email{jblocki@andrew.cmu.edu} \and IBM Almaden Research Canter \email{rrwilliams@gmail.com}}
\maketitle

\Asterisk An extended abstract of this work will appear in ICALP 2010.

\begin{abstract} We formally study two methods for data sanitation that have been used extensively in the database community: $k$-anonymity and $\ell$-diversity. We settle several open problems concerning the difficulty of applying these methods optimally, proving both positive and negative results:

\begin{itemize}
\item $2$-anonymity is in {\sf P}.
\item The problem of partitioning the edges of a triangle-free graph into 4-stars (degree-three vertices) is {\sf NP}-hard. This yields an alternative proof that $3$-anonymity is {\sf NP}-hard even when the database attributes are all {\em binary}. 
\item $3$-anonymity with only 27 attributes per record is {\sf MAX SNP}-hard.
\item For databases with $n$ rows, $k$-anonymity is in $O(4^n \cdot \poly(n)))$ time for all $k > 1$.
\item For databases with $\ell$ attributes, alphabet size $c$, and $n$ rows, k-Anonymity can be solved in $2^{O(k^2 (2c)^\ell)} + O(n \ell)$ time.  
\item $3$-diversity with binary attributes is {\sf NP}-hard, with one sensitive attribute.
\item $2$-diversity with binary attributes is {\sf NP}-hard, with three sensitive attributes.
\end{itemize}
\end{abstract}


\end{frontmatter}

\pagenumbering{arabic}

\section{Introduction}

The topic of {\em data sanitization} has received enormous attention in recent years. The high-level idea is to release a database to the public in such a manner that two conflicting goals are achieved: (1) the data is useful to benign researchers who want to study trends and identify patterns in the data, and (2) the data is not useful to malicious parties who wish to compromise the privacy of individuals. Many different models for data sanitization have been proposed in the literature, and they can be roughly divided into two kinds: {\em output perturbative} models (e.g., \cite{agrawalsrikant,differentialprivacy}) and {\em output abstraction} models (e.g., \cite{samarati2001pri,sweeney2002kam,machanavajjhala:dpb}). In perturbative models, some or all of the output data is perturbed in a way that no longer corresponds precisely to the input data (the perturbation is typically taken to be a random variable with nice properties). This include work which assumes {\em interaction} between the prospective data collector and the database, such as differential privacy. In abstraction models, some of the original data is suppressed or generalized ({\em e.g.} an age becomes an age range) in a way that preserves data integrity. The latter models are preferred in cases where data integrity is the highest priority, or when the data is simply non-numerical.

In this work, we formally study two data abstraction models from the literature, and determine which cases of the problems are efficiently solvable. We study $k$-anonymity and $\ell$-diversity.

\subsection{K-Anonymity}

The method of $k$-anonymization, introduced in \cite{samarati2001pri,sweeney2002kam}, is a popular method in the database community for publicly releasing part of a database while protecting individual identities in that database. Formally speaking, an instance of the $k$-anonymity problem is a matrix (a.k.a. database) with $n$ rows and $m$ columns with entries drawn from an underlying alphabet. Intuitively, the rows correspond to individuals and the columns correspond to various attributes of them. For hardness results, we study a special case called the {\em suppression model}, where the goal is to replace entries in the matrix with a special symbol $\star$ (called a `star'), until each row is identical to at least $k-1$ other rows. The intuition is that the information released does not explicitly identify any individual in the database, but rather identifies at worst a group of size $k$.
\footnote{This intuition can break down when combined with background knowledge~\cite{machanavajjhala:dpb}. However, our intent in this paper is not to critique the security/insecurity of these methods, but rather to understand their feasibility.}

A trivial way to $k$-anonymize a database is to suppress every entry (replacing all entries with $\star$), but this renders the database useless. In order to maximize the utility of the database, one would like to suppress the fewest entries---this is the {\sc $k$-Anonymity} problem with suppression. Meyerson and Williams~\cite{meyerson2004cok} proved that in the most general case, this is  a difficult task: {\sc $k$-Anonymity} is {\sf NP}-hard for $k \geq 3$, provided that the size of the alphabet is $\Omega(n)$. Aggarwal {\em et al.}~\cite{aggarwal2005at} improved this, showing that {\sc $3$-Anonymity} remains {\sf NP}-hard even when the alphabet size is 3.  Bonizzoni {\em et al.}~\cite{bonizzoni2007abt} further improved the result to show that  {\sc $3$-Anonymity} is {\sf APX}-hard, even with a binary alphabet. They also showed that {\sc $4$-Anonymity} with a constant number of attributes per record is {\sf NP}-hard. Two basic questions remain:

 \begin{enumerate} 
\item How difficult is $3$-anonymity with a small number of attributes per record? 
\item How difficult is the $2$-anonymity problem?
\end{enumerate} 

Addressing the two questions above, we discover both a positive and negative result.  On the positive side, in Section~\ref{2anonP} we present a polynomial time algorithm for {\sc $2$-Anonymity}, applying a result of Anshelevich and Karagiozova~\cite{anshelevich2007tbm}: 

\begin{theorem} \label{2anon}{\sc $2$-Anonymity} is in {\sf P}.\end{theorem}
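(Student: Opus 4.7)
The plan is to reduce \textsc{2-Anonymity} to the Simplex Matching problem of Anshelevich and Karagiozova~\cite{anshelevich2007tbm}, which is polynomial-time solvable whenever its weights satisfy a certain ``simplex'' condition. I proceed in three steps.

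First I establish a structural reduction showing that some optimal 2-anonymization uses equivalence classes of size exactly 2 or 3. If a group $G$ of size at least 4 appears in an optimal solution, let $s(G)$ denote the number of columns on which the rows of $G$ do not all agree, so the suppression cost of $G$ equals $|G|\cdot s(G)$. Splitting $G$ into any two parts $G_1,G_2$ of sizes at least 2 yields cost $|G_1|\,s(G_1)+|G_2|\,s(G_2)$; because any column agreed upon in $G$ remains agreed upon in each $G_i$, we have $s(G_i)\le s(G)$ and the total cost does not increase. Iterating, I may assume a partition into pairs and triples.

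Next I encode the reduced problem as a weighted hypergraph on the rows: for every pair $\{r_i,r_j\}$ include a 2-edge of weight $2\,d_H(r_i,r_j)$, and for every triple $\{r_i,r_j,r_k\}$ include a 3-edge of weight $3\,\mu(r_i,r_j,r_k)$, where $d_H$ is Hamming distance and $\mu$ counts columns on which the three rows are not all equal. An optimal 2-anonymization now corresponds exactly to a minimum-weight partition of the vertex set into such 2-edges and 3-edges, which is the input format for Simplex Matching.

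The main obstacle is that the natural weights above violate the Anshelevich--Karagiozova simplex inequality: a column on which two rows agree and a third disagrees contributes $3$ to a triple weight but only $2$ to an adjacent pair weight. I plan to patch this by a cost-preserving offset --- adding $2m$ to every 2-edge and $3m$ to every 3-edge, where $m$ is the number of attributes. Any valid partition uses $a$ pairs and $b$ triples with $2a+3b=n$, so the offset contributes a constant $2am+3bm=mn$ and optima are preserved. A column-by-column check then shows that the shifted weights satisfy the simplex inequality for every triple, so the Anshelevich--Karagiozova algorithm returns the optimal partition in polynomial time, proving Theorem~\ref{2anon}.
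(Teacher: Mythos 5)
Your overall strategy is exactly the paper's: split large groups into pairs and triples, encode pairs and triples as weighted 2- and 3-hyperedges over the rows, and invoke the Anshelevich--Karagiozova Simplex Matching algorithm. The splitting argument and the encoding are fine. However, the step where you diagnose and ``patch'' a violation of the simplex inequality contains a genuine error: the obstacle you describe does not exist. The required condition is $c(u,v)+c(v,w)+c(u,w)\le 2\,c(u,v,w)$, i.e.\ the \emph{sum of all three} pair costs against \emph{twice} the triple cost. Checking column by column with your natural weights: a column on which all three rows agree contributes $0$ to both sides; a column on which exactly two agree contributes $2(0+1+1)=4$ to the left and $2\cdot 3=6$ to the right; a column on which all three differ contributes $6$ to each side. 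So the inequality holds term by term, and the unmodified weights $2\,d_H$ and $3\mu$ already form a valid Simplex Matching instance. Your claimed counterexample compares the triple contribution $3$ against a \emph{single} pair contribution $2$, which is not the inequality the algorithm requires; you appear to have the direction or the form of the condition reversed. (The paper verifies the same fact by a different route: the per-row star count cannot decrease when a third row is added, so $\tfrac13 C_{i,j,k}\ge\tfrac12 C_{i,j}$, and summing the three symmetric versions gives the condition.)

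The proposed fix is also internally inconsistent: adding $2m$ to every 2-edge and $3m$ to every 3-edge adds exactly $6m$ to both sides of the inequality $c(u,v)+c(v,w)+c(u,w)\le 2\,c(u,v,w)$, so the offset is a no-op with respect to the simplex condition. If the original weights really violated it, the shifted weights would violate it by the same margin; since you end by asserting that the shifted weights satisfy it, you have implicitly asserted that the original ones do too, contradicting your stated obstacle. The theorem and the reduction are correct, but as written your argument asserts a false claim and then ``repairs'' it with a transformation that cannot repair anything; delete the patch and replace it with the direct column-by-column verification above.
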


The polynomial time algorithm works not only for the simple suppression model, but also for the most general version of $k$-anonymity, where for each attribute we are given a {\em generalization hierarchy} of possible ways to withhold data. 

In Section~\ref{constantattr}, we consider $k$-anonymity in databases where the number of attributes per record is constant. This setting seems to be the most relevant for practice: in a database of users, the number of attributes per user is often dwarfed by the number of users in the database. We find a surprisingly strong negative result.

\begin{theorem}\label{3anon27}
{\sc $3$-Anonymity} with just 27 attributes per record is {\sf MAX SNP}-hard. Therefore, {\sc $3$-Anonymity} does not have a polynomial time approximation scheme in this case, unless {\sf P $=$ NP}.
\end{theorem}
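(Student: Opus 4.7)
The plan is to establish an L-reduction (a PTAS-preserving reduction) from a known MAX SNP-hard problem with bounded local structure, so that the resulting 3-Anonymity instance has only a constant number of attributes per row. A natural starting point is a bounded-occurrence/bounded-degree variant such as MAX 3-SAT-B or Vertex Cover on cubic graphs, both of which are MAX SNP-complete. The first bullet of the abstract (partitioning a triangle-free graph into 4-stars) suggests that bounded edge-partitioning is the right intermediate target: 3-Anonymity with binary attributes naturally groups rows into "clusters" whose internal agreement pattern mimics a star or small subgraph being carved out of an auxiliary graph.

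My first step would be to design a gadget that translates an element of the source instance (a clause, edge, or vertex) into a small, constant-size block of binary rows. The 27 binary columns would be partitioned into a few "role" fields: a handful of columns act as type tags identifying which gadget a row belongs to, another block encodes a locally-determined "choice" (e.g., which of a constant number of incident objects the row is aligned with), and a final block serves as filler/padding enforcing that only rows from compatible gadgets can be anonymized together without paying many stars. Because both the alphabet (binary) and the number of columns (27) are constant, the gadget sizes and their local interactions are all bounded, which is what lets the reduction proceed without inflating the attribute count.

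Second, I would verify the two L-reduction properties. Linearity of optima follows because each gadget contributes a bounded number of stars in the intended clustering and a bounded number of rows, so $\OPT$ of the 3-Anonymity instance is $\Theta(\OPT)$ of the source. For the soundness direction I would use a "canonicalization" argument: given any clustering, I would locally repair clusters that do not conform to the gadget template, showing that each repair decreases (or at least does not increase too much) the star count, so that from any approximate anonymization one can extract an assignment/cover for the source instance whose error grows only by a constant factor. Together with MAX SNP-hardness of the source and the Arora et al. PCP theorem ruling out PTAS, this yields the claimed MAX SNP-hardness of 3-Anonymity with 27 attributes.

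The main obstacle, as the explicit bound 27 suggests, is keeping the attribute count fixed. Standard reductions for $k$-anonymity use per-element columns to distinguish rows, which causes the attribute count to scale with $n$; here one must do all bookkeeping through a fixed-width "signature," so the distinguishing work has to be carried by row multiplicity and by combinatorial compatibility of signatures rather than by fresh columns. Getting simultaneously (i) enough column space to tag gadget roles, (ii) a tight enough local gadget that any deviation costs stars proportional to the source-problem error, and (iii) the L-reduction constants small enough to preserve the MAX SNP gap, will be the delicate balancing act of the proof.
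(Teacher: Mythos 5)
There is a fatal flaw in your plan: you insist on a \emph{binary} alphabet together with a constant number (27) of columns. But the paper's own Theorem~\ref{kAnonAlg2} (via the kernelization of Lemma~\ref{constAttributeAndAlphabetLemma}) shows that when both the alphabet size $c$ and the number of attributes $\ell$ are constant, $k$-Anonymity is solvable in $2^{O(k^2(2c)^\ell)} + O(n\ell)$ time --- i.e., linear time for fixed $k,c,\ell$. The reason is simple: with $c=2$ and $\ell=27$ there are only $2^{27}$ possible distinct rows, so once any row value occurs more than $k(2k-1)2^{\ell}$ times it can be peeled off into its own group, and the residual instance has constant size. Hence 3-Anonymity with 27 binary attributes is in {\sf P}, and no L-reduction of the kind you describe can exist unless {\sf P} $=$ {\sf NP}. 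You correctly identify the central tension (bookkeeping cannot be done with fresh columns), but you resolve it the wrong way: the distinguishing work must be carried by an \emph{unbounded alphabet}, not by a fixed-width binary signature. The paper is explicit that its proof ``uses an alphabet with $\Omega(n)$ cardinality,'' and that this is unavoidable in a strong sense (a hardness proof with small alphabet and constant $\ell$ would give subexponential algorithms for all of {\sf NP}).

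For reference, the paper's actual construction is an L-reduction from {\sc Max 3DM-3}: the alphabet contains one fresh symbol per triple of $M$ and per element of $W\cup X\cup Y$; each element $r$ becomes one row of 27 cells filled with the symbols of the (at most three) triples containing $r$, repeated in a $W$/$X$/$Y$-dependent pattern (blocks of nine, blocks of three, alternating) so that the three rows of a triple $(w,x,y)\in M$ agree in exactly one column and no other triple of rows, and no group of four or more rows, agrees anywhere. This yields per-group costs of $78$ for a matched triple versus $81$ (or $27$ per row) otherwise, from which the L-reduction constants $\alpha = 1/27$, $\beta = 27$ follow directly; no canonicalization or local-repair argument is needed. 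Your choice of a bounded-occurrence source problem is in the right spirit (bounded occurrence is what keeps the number of columns at $3\times 9 = 27$), but the gadget you sketch cannot be salvaged without abandoning the binary alphabet.
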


The proof uses an alphabet with $\Omega(n)$ cardinality. This motivates the question: how efficiently can we solve $k$-anonymity with a small alphabet and constant number of attributes per record? Here we can prove a positive result, showing that when the number of attributes is small and the alphabet is constant, there are subexponential algorithms for optimal $k$-anonymity for every $k > 1$.

\begin{theorem} \label{kAnonAlg}
For every $k > 1$, an optimal $k$-anonymity solution can be computed in  $O(4^n \poly(n))$ time, where $n$ is the total number of rows in the database.
\end{theorem}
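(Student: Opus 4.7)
The plan is to recast $k$-anonymity as a minimum-cost partition problem on the row set and solve it via a Held--Karp style subset dynamic program. The key observation is that any feasible $k$-anonymous suppression pattern partitions the rows into equivalence classes (the sets of rows that become identical) of size at least $k$, and conversely, for any such partition the cheapest consistent suppression is forced: within a block $B$ and for each attribute $j$, if all rows of $B$ already agree on attribute $j$ we suppress nothing in that column; otherwise we must replace every entry of column $j$ within $B$ by a star, at a cost of $|B|$. Hence the block cost is
\[
\mathrm{cost}(B) = |B| \cdot |\{j : \text{the rows of } B \text{ disagree on attribute } j\}|,
\]
computable in $\poly(n)$ time, and the problem reduces exactly to finding a partition of $[n]$ into blocks of size $\geq k$ that minimizes the sum of block costs.

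Next I would set up the dynamic program. Let $f(S)$ denote the minimum suppression cost required on the sub-database with row set $S \subseteq [n]$, with $f(\emptyset) = 0$ and $f(S) = +\infty$ whenever $0 < |S| < k$. For $|S| \geq k$ the recurrence is
\[
f(S) = \min_{\substack{T \subseteq S \\ |T| \geq k}} \bigl(\mathrm{cost}(T) + f(S \setminus T)\bigr),
\]
where $T$ plays the role of one block of the optimal partition of $S$; the answer to the original instance is $f([n])$. Correctness is immediate from the block-separability observation above.

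Finally I would bound the running time. Precomputing $\mathrm{cost}(T)$ for every $T \subseteq [n]$ takes $O(2^n \poly(n))$ time. Evaluating $f$ in order of increasing $|S|$ enumerates, for each $S$, the subsets $T \subseteq S$, so the total work is at most $\sum_{S \subseteq [n]} 2^{|S|} \cdot \poly(n) = 3^n \cdot \poly(n)$, comfortably inside the claimed $O(4^n \poly(n))$ bound (the looser $4^n$ bound can also be read off directly by enumerating arbitrary pairs $(S,T) \in 2^{[n]} \times 2^{[n]}$ and discarding those with $T \not\subseteq S$). The only substantive step is the opening observation that an optimum suppression is determined by the induced partition of rows; once that is in hand the algorithm is a textbook subset DP and there is no real obstacle.
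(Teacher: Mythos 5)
Your proof is correct, but it takes a genuinely different route from the paper's. The paper also reduces to a minimum-cost partition of the rows into blocks of size at least $k$, but instead of a subset dynamic program it uses a balanced divide-and-conquer: it first observes that any block in an optimal solution may be assumed to have size at most $2k-1$, so the blocks can always be split into two multisets of cardinality within $2k$ of $n/2$; it then recurses on both halves over all $\binom{n}{\approx n/2}$ candidate splits, yielding the recurrence $T(n) \leq 2^{n+1}\, T(n/2+2k) + 2^n$ and hence the $O(4^n \poly(n))$ bound. Your approach peels off one block at a time via the standard set-partition DP $f(S) = \min_{T \subseteq S,\, |T| \geq k} (\mathrm{cost}(T) + f(S\setminus T))$, which has two advantages: it needs no bound on block sizes (the $|B|\leq 2k-1$ normalization is simply unnecessary), and the $\sum_{S} 2^{|S|} = 3^n$ accounting gives a strictly better $O(3^n \poly(n))$ running time, comfortably inside the claimed bound. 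The paper's version buys nothing here that yours does not, except that its "split into two near-halves" framing is stated directly over multisets of rows rather than over indexed subsets; your block-separability observation (that the optimal suppression pattern is forced by the induced partition, costing $|B|$ stars per disagreeing column) is exactly the fact the paper uses implicitly in its base case. Both arguments are sound.
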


\begin{theorem} \label{kAnonAlg2}
Let $\ell$ be the number of attributes in a database, let $c$ be the size of its alphabet, and let $n$ be the number of rows.  Then k-Anonymity can be solved in $2^{O(k^2 (2c)^\ell)} + O(n \ell)$ time.  
\end{theorem}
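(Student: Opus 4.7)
The plan is to kernelize the input to an instance of size $O(k(2c)^\ell)$ and then invoke Theorem~\ref{kAnonAlg}. First, in $O(n\ell)$ time, radix-sort the $n$ rows over their $\ell$ attributes (alphabet of size $c$), grouping identical rows into equivalence classes with multiplicities $(n_1,\ldots,n_M)$ where $M \le c^\ell$. Any optimum may be assumed to send identical rows to the same generalization, so the remainder of the algorithm depends only on this multiplicity vector.

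The key reduction is the following kernelization lemma: if $n_r \ge 2k \cdot 2^\ell$ for some row type $r$, then some optimum of the instance contains a pure-$r$ group of exactly $k$ rows (which contributes $0$ to the cost). Granting this, repeatedly peel off such a cost-zero pure group whenever a multiplicity is above threshold, reducing to a kernel with $n_r < 2k \cdot 2^\ell$ for every $r$ and total size $n' \le M \cdot 2k \cdot 2^\ell = 2k(2c)^\ell$. Invoking Theorem~\ref{kAnonAlg} on the kernel computes an optimum in $O(4^{n'}\poly(n')) = 2^{O(k(2c)^\ell)}$ time, which is within the claimed $2^{O(k^2(2c)^\ell)}$. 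Re-attaching the peeled pure groups preserves optimality at zero extra cost, giving total running time $2^{O(k^2(2c)^\ell)} + O(n\ell)$.

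The main technical work, and the step I expect to be the main obstacle, is the kernelization lemma. Put any optimum $O^*$ into canonical form by iteratively merging pairs of groups that share a meet: this is cost-neutral because the union of two groups whose meets both equal $g$ again has meet exactly $g$, with unchanged per-row star count. In canonical form each generalization is the meet of at most one group. Suppose for contradiction that $O^*$ has $n_r \ge 2k \cdot 2^\ell$ but no pure-$r$ group. Then every $r$-row lies in a mixed group whose meet has at least one star and is matched by $r$; there are at most $2^\ell - 1$ such meets, so at most $2^\ell - 1$ mixed groups of $O^*$ contain $r$-rows. By pigeonhole some such group $A$ contains at least $2k$ copies of $r$. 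Remove $k$ of them: the remaining $|A|-k \ge k+1$ rows include at least $k$ copies of $r$, so the meet of $A$ is unchanged; the $k$ freed $r$-rows form a new pure-$r$ group. This modification strictly decreases the total cost by $k$ times the (positive) number of stars in the meet of $A$, contradicting the optimality of $O^*$. Hence any optimum with $n_r \ge 2k \cdot 2^\ell$ already contains a pure-$r$ group, as required.
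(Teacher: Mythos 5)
Your proposal is correct and takes essentially the same approach as the paper: kernelize by showing (via a pigeonhole over the at most $2^\ell$ possible generalizations a copy of a row $r$ can receive, plus a cost-saving exchange) that any row type with multiplicity above a $\poly(k)\cdot 2^\ell$ threshold forces a zero-cost pure group in some optimum, peel such groups off, and run the $O(4^n \poly(n))$ algorithm of Theorem~\ref{kAnonAlg} on the resulting $O(\poly(k)(2c)^\ell)$-row kernel. Your canonicalize-then-pigeonhole variant even gives a slightly better threshold ($2k\cdot 2^\ell$ versus the paper's $k(2k-1)2^\ell$), and it shares the paper's one small elision --- verifying that the guaranteed pure-$r$ group really contains at least $2k$ copies, so that exactly $k$ can be peeled while leaving a feasible solution of the reduced instance --- which is repaired by reapplying your pigeonhole-and-exchange step to the remaining mixed groups.
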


This improves on results in~\cite{chaytor2008fixed}. Theorem \ref{kAnonAlg2} implies that k-Anonymity is solvable in polynomial time whenever $\ell \leq (\log \log n)/{\log c}$ and $c \leq \log n$. Theorem \ref{kAnonAlg2} also implies that for $c = n^{o(1)}$ and $\ell = O(1)$, $k$-anonymity is solvable in {\em subexponential time}. Therefore it is highly unlikely that we can tighten the unbounded alphabet constraint of Theorem~\ref{3anon27}, for otherwise all of ${\sf NP}$ has $2^{n^{o(1)}}$ time algorithms. 

In Section~\ref{3anonbinary}, we provide an alternative proof that {\sc Binary $3$-Anonymity}, the special case of the problem where all of the attributes are binary-valued, is {\sc NP}-hard. This result is weaker than~\cite{bonizzoni2007abt} who recently showed that {\sc Binary $3$-Anonymity} is {\sc APX}-hard.  However, our proof also shows that a certain edge partitioning problem is {\sf NP}-complete, which to the best of our knowledge is new \footnote{{\sc Edge Partition Into Triangles} is NP-Complete as is {\sc Edge Partition Into 4-Stars} \cite{graphDecompIsNPC}, but this does not imply that {\sc Edge Partition Into Triangles and 4-Stars} is NP-Complete.}. Let {\sc Edge Partition Into Triangles and 4-Stars} be the problem of partitioning the edges of a given graph into 3-cliques (triangles) and 4-stars (graphs with three degree-1 nodes and one degree-3 node).

\begin{theorem}\label{edgepart} {\sc Edge Partition Into Triangles and 4-Stars}  is {\sc NP}-complete. \end{theorem}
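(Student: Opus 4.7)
The plan is to prove Theorem~\ref{edgepart} in two pieces: an easy containment in {\sf NP}, followed by a polynomial reduction from a known {\sf NP}-complete problem, with the reduction crafted so that the output graph is triangle-free. Targeting a triangle-free output is deliberate: in a triangle-free graph any partition into triangles and 4-stars must in fact use only 4-stars, so a single reduction simultaneously establishes Theorem~\ref{edgepart} \emph{and} the triangle-free 4-star statement advertised in the abstract (the one needed for the alternative proof of binary 3-anonymity hardness).

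Membership in {\sf NP} is immediate: a candidate partition is a list of edge-disjoint subgraphs covering $E$, and one checks in linear time that each such piece has exactly three edges and induces either $K_3$ or $K_{1,3}$. For the hardness direction, I would reduce from {\sc Exact Cover by $3$-Sets} (X3C)---or alternatively from Holyer-style {\sc Edge Partition Into Triangles}, which by the footnote is already {\sf NP}-complete. Given an X3C instance with ground set $X$, $|X|=3q$, and $3$-sets $C_1,\dots,C_m$, I would build a bipartite (thus triangle-free) graph $G'$: one vertex $u_i$ per set $C_i$, one vertex $v_x$ per element $x$, with an edge $u_i v_x$ iff $x \in C_i$, together with a pendant gadget attached at each $u_i$ and each $v_x$. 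The intended semantics is that $C_i$ belongs to the exact cover iff the three element-edges at $u_i$ form a 4-star centered at $u_i$; if $C_i$ is \emph{not} chosen, then each of those three edges must instead serve as a \emph{leaf} of a 4-star centered at its element endpoint $v_x$. Correctness then reduces to a degree-counting argument at each $v_x$, where the relation $3c+\ell=\deg(v_x)$ (with $c$ the number of stars centered at $v_x$ and $\ell$ the number of leaf incidences) forces $\ell$ to equal the number of selected sets containing $x$ modulo $3$, and the gadgets are chosen to pin this quantity to exactly $1$.

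The main obstacle, and the reason the result is more delicate than the two endpoint problems in the footnote, is the design of the gadgets at $u_i$ and $v_x$. They must (i) remain triangle-free, (ii) have divisible-by-$3$ edge count contributions, (iii) admit a $K_{1,3}$-decomposition in \emph{both} the ``$C_i$ chosen'' and ``$C_i$ not chosen'' modes, yet (iv) admit \emph{no} decomposition that mixes the two modes or leaves a residual unmatched edge. Because 4-star decompositions are highly local (each edge's role is determined by which endpoint is declared its center), any slack in the gadget allows spurious partitions and breaks the reduction. I would expect to handle this by using bipartite ``trees of stars'' whose degree sequence rigidly dictates the orientation of every edge, with a small number of shared attachment vertices to couple the $u_i$-gadget to the three element-edges; verifying rigidity is the routine but laborious step. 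Once this gadgetry is in place, the X3C $\Rightarrow$ 4-star decomposition direction is immediate from the intended semantics, and the converse follows from the forced-local structure together with the modular degree constraint at each $v_x$.
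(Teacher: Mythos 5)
Your high-level strategy is the right one and matches the paper's: make the constructed graph triangle-free (indeed bipartite), so that any partition into triangles and 4-stars is forced to use only 4-stars, and then prove hardness of 4-star edge partition directly on that class. Membership in {\sf NP} is handled correctly, and the incidence-graph skeleton you describe---selector vertices on one side, element occurrences on the other, with ``chosen iff the three incident edges form a star centered at the selector vertex''---is the same basic semantics the paper uses (it reduces from {\sc 1-in-3 Sat} rather than your X3C, but that difference is cosmetic).

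The gap is that the entire technical content of the theorem is the gadget you defer as ``routine but laborious,'' and the two mechanisms you offer in its place do not suffice. First, the degree count $3c+\ell=\deg(v_x)$ at an element vertex only constrains $\ell$ modulo $3$; to conclude $\ell=1$ rather than $\ell\in\{1,4,\dots\}$ you must either restrict to bounded-occurrence X3C or build a gadget that genuinely counts, and you do not say which. Second, and more seriously, a pendant gadget at $u_i$ does not by itself forbid \emph{mixed} partitions in which, say, two of the three element-edges at $u_i$ are centered at $u_i$ and completed into a 4-star by one gadget edge while the third element-edge is centered at its element endpoint; this is exactly the spurious behavior you flag in item (iv), and ruling it out is where all the work lies. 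The paper's solution is a specific rigid gadget $G_d$---two complete 3-binary trees of depth $d$ joined by three edges---together with a parity lemma (Lemma~\ref{truefalse}) showing that in any 4-star partition the star centers in each tree must occupy alternating depth levels, so the gadget admits exactly two global modes (top-tree shared edges absorbed internally versus bottom-tree shared edges absorbed internally) with no mixing possible; a 5-star clause gadget with one forced pendant edge then enforces the ``exactly one'' semantics. Until you exhibit a gadget with an analogous two-mode rigidity proof, your reduction is a plan rather than a proof, and I would not call that step routine: generic ``trees of stars'' typically admit partial re-orientations near their attachment points unless the degree structure is engineered as carefully as the paper's binary-tree parity argument requires.
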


Theorem~\ref{edgepart} implies that the {\sc Ternary $3$-Anonymity} hardness reduction given in~\cite{aggarwal2005at} is sufficient to conclude that {\sc Binary $3$-Anonymity} is {\sc NP}-hard.

\subsection{L-Diversity}

Finally, in Section~\ref{diversity} we consider the method of $\ell$-diversity introduced in~\cite{machanavajjhala:dpb}, which has also been well-studied. This method attempts to refine the notion of $k$-anonymity to protect against knowledge attacks on particular sensitive attributes.

We will work with a simplified definition of $\ell$-diversity that captures the essentials.  Similar to $k$-anonymity, we think of an $\ell$-diversity instance as a table (database) with $m$ rows (records) and $n$ columns (attributes). However, each attribute is also given a label $q$ or $s$, inducing a partition of the attributes into two sets $Q$ and $S$.  $Q$ is called the set of quasi-identifier attributes and $S$ called the set of sensitive attributes.

\begin{definition}
A database $D$ is said to be {\em $\ell$-diverse} if for every row $u_0$ of $D$ there are (at least) $\ell -1$ distinct rows $u_1,...,u_{\ell-1}$ of $D$ such that:
\begin{enumerate}
\item $\forall q \in Q, 0 \leq i < j < \ell$ we have $u_i[s] = u_j[s]$
\item $\forall s \in S, 0 \leq i < j < \ell$ we have $u_i[s]\neq u_j[s]$
\end{enumerate}
\end{definition}

Constraint 1 is essentially the same as $k$-anonymity.  Any row must have at least $k-1$ other rows whose (non sensitive) attributes are identical. Intuitively, Constraint 2 prevents anyone from definitively learning any row's sensitive attribute; in the worst case, an individual's attribute can be narrowed down to a set of at least $\ell$ choices.  Similar to $k$-anonymity with suppression, we allow stars to be introduced to achieve the two constraints.

 We can show rather strong hardness results for $\ell$-diversity.

\begin{theorem}\label{2div}
Optimal $2$-diversity with binary attributes and three sensitive attributes is {\sf NP}-hard.
\end{theorem}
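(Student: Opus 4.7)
The plan is to reduce from a 3-partite NP-hard matching problem, most naturally 3-Dimensional Matching (3DM). Given a 3DM instance with element sets $U_1, U_2, U_3$ (each of size $n$) and admissible triples $T \subseteq U_1 \times U_2 \times U_3$, I would build a database $D$ with binary quasi-identifiers and three binary sensitive attributes so that the minimum number of stars in a 2-diverse suppression of $D$ falls below a chosen threshold $K$ iff the 3DM instance admits a perfect matching.

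I would first assign each element $u \in U_i$ a row $r_u$ with a unique binary QI fingerprint (a Hamming-spread code) and with sensitive value $s_i \in \{0,1\}^3$ marking class $i$; the vectors $s_1,s_2,s_3$ are chosen to lie in three of the four complementary pair classes available among the eight sensitive values. For each candidate triple $t = (a,b,c) \in T$, I introduce three ``triple rows'' with sensitive values $\bar s_1, \bar s_2, \bar s_3$ and QIs matching those of $r_a, r_b, r_c$ respectively. Since partners in a 2-diverse pairing must have complementary sensitive values, the only cheap partner available to $r_a$ is the class-$1$ triple row of a triple that contains $a$ (and symmetrically for elements of $U_2, U_3$). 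The fourth complementary pair class, together with auxiliary dummy rows, absorbs the triple rows of unchosen triples at a fixed controlled cost that is folded into $K$.

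The threshold $K$ is calibrated so that any valid 3DM yields a 2-diverse database of cost exactly $K$, by pairing each element with a triple row from its matched triple at zero QI cost and pairing the remaining triple rows among themselves cheaply through the dump gadget. Conversely, any 2-diverse suppression of cost at most $K$ would be forced to pair every element row with a triple row of a triple actually containing it, and a counting argument on which triples are thereby used would recover a perfect matching.

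The main obstacle, as in most reductions of this flavor, is that 2-diversity allows valid groups larger than pairs---any group whose multiset of sensitive values is closed under complementation is legal---and such larger groups can in principle amortize stars across many rows and thereby sneak below the threshold without corresponding to a matching. I would neutralize this by spreading the element-row fingerprints far apart in Hamming distance, so that any group containing two distinct element rows must suppress nearly all fingerprint bits, making large mixed groups strictly more expensive per row than the intended pair-matching cost; equivalently, one can argue that without loss of generality an optimal solution consists entirely of pair groups, reducing the question to a constrained minimum-weight pair cover. The three sensitive attributes are essential for the hardness: one or two sensitive attributes give too few complementary classes to force element rows into distinct partner slots, and the resulting problem collapses into tractable bipartite matching or edge cover; three sensitive attributes provide exactly enough routing to imprint the 3-partite structure of 3DM while keeping the alphabet binary.
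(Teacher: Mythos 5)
Your high-level instinct---route the hardness through a 3-partite structure using the three binary sensitive attributes as class labels, with complementarity forcing who may pair with whom---matches the spirit of the paper's proof, but your specific reduction from 3DM has a fatal decoupling problem. In your construction the three rows of a triple $t=(a,b,c)$ are not tied to one another in any way: element row $r_a$ only needs to find \emph{some} class-$1$ triple row whose QI fingerprint equals its own, i.e.\ the class-$1$ row of \emph{any} triple containing $a$, and nothing forces the class-$2$ and class-$3$ rows of that same triple to be consumed by $r_b$ and $r_c$. A ``scattered'' solution that takes one row from each of many different triples pairs every element row at zero QI cost, sends exactly $3|T|-3n$ triple rows to the dump, and therefore achieves the same total cost as a genuine perfect matching. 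Since every element of a reasonable 3DM instance occurs in at least one triple, your threshold $K$ is met by yes- and no-instances alike, and the reduction does not distinguish them. Separately, your dump gadget cannot work as described: a pair is $2$-diverse only if the two sensitive vectors are exact complements, so a leftover triple row with sensitive value $\bar s_1$ can only be absorbed by a row with value $s_1$; rows drawn from the ``fourth complementary pair class'' are not complementary to any $\bar s_i$ and hence can never partner with a triple row, while giving the dummies values $s_1,s_2,s_3$ makes them interchangeable with element rows and destroys the cost accounting.

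For contrast, the paper avoids both issues by reducing from {\sc Edge Partition Into Triangles} on tripartite graphs rather than from 3DM. Rows are edges, quasi-identifiers form the edge--vertex incidence matrix (as in the Aggarwal et al.\ reduction for $3$-anonymity), and the $j$-th sensitive bit of edge $e$ records whether $e$ touches partite class $V_j$. Every edge touches exactly two of the three classes, so any two edges agree on some sensitive attribute and \emph{no pair group is ever feasible}; a group of three edges is feasible precisely when the three sensitive vectors are $(1,1,0),(1,0,1),(0,1,1)$, i.e.\ when the edges form a triangle with one vertex per class, and the incidence-matrix QIs then cost exactly three stars per row. The coupling you are missing is thus supplied for free by the geometry of a triangle, and there is no leftover mass to dump. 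If you want to salvage a 3DM-based route you would need an explicit gadget that forces the three rows of a triple to be used together or not at all, which is essentially the part of the argument that is currently absent.
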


\begin{theorem}\label{3div}
Optimal $3$-diversity with binary attributes and one sensitive attribute is {\sf NP}-hard.
\end{theorem}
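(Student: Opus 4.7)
The plan is to give a polynomial-time reduction from Binary 3-Anonymity, whose {\sf NP}-hardness is established in this paper via Theorem~\ref{edgepart} (and was previously shown by Bonizzoni et al.~\cite{bonizzoni2007abt}). Given an instance $(D,k)$ of Binary 3-Anonymity, consisting of an $n \times m$ binary matrix $D$ and a star budget $k$, I would construct an instance $D'$ of 3-Diversity as follows: label all $m$ columns of $D$ as quasi-identifier attributes, and append a single sensitive column whose entry in row $i$ is the unique label $i$. The quasi-identifiers remain binary and there is exactly one sensitive attribute (whose alphabet is allowed to grow with $n$, matching the natural reading of the theorem statement). The star budget carries over unchanged.

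The core claim is that $\mathrm{OPT}_{3\text{-}\mathrm{anon}}(D) = \mathrm{OPT}_{3\text{-}\mathrm{div}}(D')$. For one direction, any 3-anonymization of $D$ with $k^*$ stars is immediately a 3-diversification of $D'$: each quasi-identifier equivalence class has size at least $3$, and since the original sensitive values are pairwise distinct, Constraint~2 of the definition is satisfied automatically inside every class. For the other direction, any 3-diversity solution $D''$ satisfies Constraint~1, which is exactly the 3-anonymity condition on the quasi-identifier columns. Hence the restriction of $D''$ to those columns is a valid 3-anonymization of $D$, and the number of stars it uses there is already at least $\mathrm{OPT}_{3\text{-}\mathrm{anon}}(D)$; additional stars placed in the sensitive column can only inflate the total cost. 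Therefore the two optima coincide, and a decision algorithm for 3-diversity on $D'$ yields one for Binary 3-Anonymity on $D$.

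The only real subtlety I anticipate is the semantics of $\star$ inside the sensitive column. Under the natural reading of Constraint~2, two rows both suppressed at the sensitive attribute have $u_i[s] = u_j[s] = \star$ and therefore fail distinctness; this is exactly what guarantees that suppressing sensitive entries is never useful in our construction, closing the only loophole I see. A second minor point to record is that we are reducing from an exact-decision {\sf NP}-hard problem (not merely an approximation-hard one), which is indeed what both Theorem~\ref{edgepart} and~\cite{bonizzoni2007abt} provide. Beyond that the reduction is plainly polynomial, the gadget uses a single sensitive column, and the quasi-identifier columns stay binary, so the hardness is preserved under the exact parameters stated in the theorem.
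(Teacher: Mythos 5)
Your reduction is internally sound as far as it goes: appending a unique-identifier sensitive column does make $3$-diversity on $D'$ coincide exactly with Binary $3$-Anonymity on $D$, since Constraint 2 becomes vacuous when all sensitive values are distinct, Constraint 1 is literally the $3$-anonymity condition on the quasi-identifiers, and your handling of $\star$ in the sensitive column closes the only loophole. The problem is that this proves a weaker theorem than the one the paper actually claims. The point of this result (and of Theorem~\ref{2div} before it) is that hardness survives at the \emph{smallest possible} parameters: the paper's statement in the body reads ``with only one sensitive \emph{ternary} attribute,'' and its proof explicitly notes that three values is the minimum for $3$-diversity to be feasible at all. Your sensitive column takes $n$ distinct values, so its alphabet grows linearly with the database; under that parameterization the observation that $3$-diversity subsumes $3$-anonymity is essentially immediate, and the result loses most of its content. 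To recover the stated theorem you must collapse the sensitive alphabet to three symbols while still forcing Constraint 2 to do real work.

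The paper's own proof does exactly that: it reduces from Edge Partition Into Triangles on \emph{tripartite} graphs (NP-complete per~\cite{garey1979cai}), takes the edge--vertex incidence matrix as the binary quasi-identifiers (as in the reduction of Appendix~\ref{edgePartIntoTrianglesAnd4StarsToBinary3Anonymity}), and adds a single ternary sensitive attribute recording which pair of partite classes an edge spans. A triangle in a tripartite graph has one edge of each of the three types, so its three rows carry pairwise-distinct sensitive values, while any $4$-star has two edges of the same type and is therefore excluded by Constraint 2; that is precisely what lets a three-letter sensitive alphabet suffice, with the cost analysis then identical to Lemma~\ref{AggarwalReductionLemma}. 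If you wanted to keep your reduction from Binary $3$-Anonymity, you would need the hard instances of Theorem~\ref{edgepart} to admit a $3$-coloring of the rows under which every useful group is rainbow --- which is exactly the tripartite structure the paper exploits --- so you would end up rebuilding the paper's gadget in any case.
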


Independent of their applications in databases, $k$-anonymity and $\ell$-diversity are also interesting from a theoretical viewpoint. They are natural combinatorial problems with a somewhat different character from other standard {\sf NP}-hard problems. They are a kind of discrete partition task that has not been studied much: {\em find a partition where each part is intended to ``blend in the crowd.''} Such problems will only become more relevant in the future, and we believe the generic techniques developed in this paper should be useful in further analyzing these new partition problems.

\section{Preliminaries}

We use $\poly(n)$ to denote a quantity that is polynomial in $n$.

\begin{definition} Let $n$ and $m$ be positive integers. Let $\Sigma$ be a finite set.
A \emph{database with $n$ rows (records) and $m$ columns (attributes)} is a matrix from $\Sigma^{n \times m}$. The \emph{alphabet} of the database is $\Sigma$. \end{definition}

\begin{definition}
Let $k$ be a positive integer. A database is said to be \emph{$k$-anonymous} or \emph{$k$-anonymized} if for every row $r_i$ there exist at least $k-1$ identical rows.
\end{definition}

As mentioned earlier, there are two methods of achieving $k$-anonymity: suppression and generalization.  In the suppression model, cells from the table are replaced with stars until the database is $k$-anonymous.  Informally, the generalization model allows the entry of an individual cell to be replaced by a broader category. For example, one may change a numerical entry to a range, {\em e.g.} (Age: 26 $\rightarrow$ Age: [20-30]). A formal definition is given in Section~\ref{generalization}.

In our hardness results, we consider $k$-anonymity with suppression. Since suppression is a special case of generalization, the hardness results also apply to $k$-anonymity with generalization.  Interestingly, our polynomial time $2$-anonymity algorithm works under both models.

\begin{definition}
Under the suppression model, the \emph{cost} of a $k$-anonymous solution to a database is the number of stars introduced.
\end{definition}

We let $Cost^{\star}_k(D)$ denote the minimum cost of $k$-anonymizing database $D$.

For the proofs of hardness, we introduce a few graph theoretical notions. Recall $K_n$ denotes the complete graph on $n$ vertices; we use the word {\em triangle} to denote $K_3$.

\begin{definition} Let $k \geq 1$. A {\em $k$-star} is a simple graph with $k-1$ edges, all of which are incident to a common vertex $v$.  $v$ is called the {\em center} of the $k$-star. The other $k-1$ vertices are called the {\em leaves} of the $k$-star.
\end{definition}

We also need a particular type of graph which we call a $3$-binary tree. All interior nodes of such a tree have degree three. 

\begin{definition}
Let $d\in \mathbb{N}$ be given. A \emph{$3$-binary tree of depth $d$} is a complete tree of depth $d$ where the root has three children and all other nodes have two children.
\end{definition}

For our inapproximability results, we need the notion of an L-reduction~\cite{papadimitriou1988optimization}.

\begin{definition}
Let $A$ and $B$ be two optimization problems and let $f:A\rightarrow B$ be a polynomial time computable transformation.  $f$ is an {\em L-reduction} if there are positive constants $\alpha$ and $\beta$ such that
\begin{enumerate}
\item $OPT(f(I)) \leq \alpha \cdot OPT(I)$
\item For every solution of $f(I)$ of cost $c_2$ we can in polynomial time find a solution of $I$ with cost $c_1$ such that
$$|OPT(I)-c_1|\leq \beta \cdot |OPT(f(I))-c_2|$$
\end{enumerate}
\end{definition}

\section{Polynomial Time Algorithm for 2-Anonymity}\label{2anonP}

Because 3-anonymity is hard even for binary attributes, it is natural to wonder if 2-anonymity is also difficult. However it turns out that achieving optimal 2-anonymity is polynomial time solvable. The resulting algorithm is nontrivial and would require heavy machinery to implement. We rely on a special case of hypergraph matching called {\sc Simplex Matching}, introduced in~\cite{anshelevich2007tbm}.

\begin{definition}
{\sc Simplex Matching}: Given a hypergraph $H = (V, E)$ with hyperedges of size 2 and 3 and a cost function $c : E \rightarrow \N$ such that
\begin{enumerate}
\item $(u,v,w) \in E(H) \Longrightarrow (u,v),(v,w),(u,w) \in E(H)$ and
\item $c(u,v)+c(v,w)+c(u,w) \leq 2 \cdot c(u,v,w)$
\end{enumerate}
find $M \subseteq E$ such that for all $v \in V$ there is a unique $e \in M$ containing $v$, and $\sum_{e \in M} c(e)$ is minimized.
\end{definition}

Anshelevich and Karagiozova gave a polynomial time algorithm to solve {\sc Simplex Matching}. We show that {\sc 2-Anonymity} can be efficiently reduced to a simplex matching.

\begin{reminder}{\ref{2anon}} {\sc $2$-Anonymity} is in {\sf P}.
\end{reminder}

\begin{proof}
Given a database $D$ with rows $r_1, \ldots, r_n$, let $C_{i,j}$ denote the number of stars needed to make rows $r_i$ and $r_j$.  Similarly define $C_{i,j,k}$ to be the number of stars needed to make $r_i, r_j, r_k$ all identical.  Observe that in a $2$-anonymization, any group with more than three identical rows could simply be split into subgroups of size two or three without increasing the anonymization cost. Therefore we may assume (without loss of generality) that the optimal $2$-anonymity solution partitions the rows into groups of size two or three.

Construct a hypergraph $H$ as follows:

\begin{enumerate}
\item For every row $r_i$ of $D$, add a vertex $v_i$.
\item For every pair $r_i,r_j$, add the 2-edge $\{v_i,v_j\}$ with cost $c(v_i,v_j) = C_{i,j}$.
\item For every triple $r_i,r_j,r_k$, add the 3-edge $\{v_i,v_j,v_k\}$ with cost $c(v_i,v_j,v_k) = C_{i,j,k}$.
\end{enumerate}

Thus $H$ is a hypergraph with $n$ vertices and $O(n^3)$ edges. We claim that $H$ meets the conditions of the simplex matching problem. The first condition is trivially met. Suppose we anonymize the pair of rows $r_i,r_j$ with cost $C_{i,j}$, so both rows have $\frac{1}{2} C_{i,j}$ stars when anonymized. Observe that if we decided to anonymize the group $r_i,r_j,r_k$, the number of stars introduced per row would not decrease. That is, for all $i,j,k$ we have \[\frac{1}{3} \cdot C_{i,j,k} \geq \frac{1}{2} \cdot C_{i,j}.\] By symmetry, we also have \[\frac{1}{3} \cdot C_{i,j,k} \geq \frac{1}{2} \cdot C_{j,k},~ \frac{1}{3} \cdot C_{i,j,k} \geq \frac{1}{2} \cdot C_{i,k}.\] Adding the three inequalities together,  $$ C_{i,j,k} \geq \frac{1}{2} (C_{i,j} + C_{j,k}+C_{i,k}).$$ Therefore $H$ is an instance of the simplex matching problem.

Finally, observe that any simplex matching of $H$ corresponds to a $2$-anonymization of $D$ with the same cost, and vice-versa.  \QED \end{proof}

\subsection{The general case}\label{generalization}

The proof of Theorem~\ref{2anon} also carries over to the most general case of {\sc $2$-Anonymity}, where instead of only suppressing entries with stars, we have a {\em generalization hierarchy} of possible values to write to an entry. We give a simple definition of generalization hierarchy that captures the essential features described in~\cite{sweeney2002kam}.

\begin{definition} Let $\Sigma$ be an alphabet of attributes, and let $\Gamma \supsetneq \Sigma$.
A {\em generalization hierarchy} is a rooted tree $T$ on $|\Gamma|$ nodes with $|\Sigma|$ leaves, where the vertices $v$ in $T$ are put in one-to-one correspondence with alphabet symbols $a(v) \in \Gamma$, the leaves are in one-to-one correspondence with the symbols of $\Sigma$, and all vertices $v$ have a cost $c(v) \in \mathbb{N}$. The cost function satisfies the property that if $u$ is the parent of $v$ in $T$, then $c(u) \geq c(v)$.  \end{definition}

The key property of a generalization hierarchy is that the cost function decreases as one moves from the root of $T$ down to its leaves. Note the suppression model of {\sc $k$-Anonymity} can be modeled with a trivial generalization hierarchy: we can take a star graph $T$ where the center of the star has symbol $\star$ and cost $1$, while the leaves (corresponding to the letters of $\Sigma$) have cost $0$. For any generalization hierarchy $T$, one can define the {\sc $k$-Anonymity-$T$} problem, where the goal is to replace some entries in a matrix from $\Sigma^{n \times m}$ with symbols in $\Gamma-\Sigma$ such that (a) every row is identical to at least $k-1$ other rows, (b) every symbol replaced is a {\em successor} of the new alphabet symbol replacing it (in $T$), and (c) the total sum of costs associated with these new symbols is minimized.

\begin{theorem} For every generalization hierarchy $T$,  {\sc $k$-Anonymity-$T$} is in {\sf P}.
\end{theorem}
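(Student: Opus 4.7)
The plan is to rerun the Simplex-Matching reduction from the proof of Theorem~\ref{2anon}, replacing the star-suppression cost function with a $T$-generalization cost function. Given a database $D$, for each attribute $a$ (with its generalization tree $T_a$) and each subset $S$ of rows, let $c_a(S)$ denote the cost of the lowest-common-ancestor in $T_a$ of the values $\{r_i[a] : i\in S\}$. The minimum cost of generalizing every row in $S$ to share a common symbol in attribute $a$ is $|S|\cdot c_a(S)$, since the best common ancestor is the LCA and each row must be rewritten to it. I therefore define the pair cost $C_{i,j} = 2\sum_a c_a(\{i,j\})$ and the triple cost $C_{i,j,k} = 3\sum_a c_a(\{i,j,k\})$; both are polynomial-time computable in the size of $D$ and $T$.

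Next I would build the hypergraph $H$ on vertices $\{v_1,\dots,v_n\}$ exactly as in Theorem~\ref{2anon}: one $2$-edge per pair of rows with cost $C_{i,j}$, one $3$-edge per triple with cost $C_{i,j,k}$. Then I would verify the two Simplex-Matching preconditions. The hyperedge-closure property is immediate by construction. The substantive requirement is $C_{i,j}+C_{j,k}+C_{i,k}\le 2C_{i,j,k}$, and here the cost-monotonicity axiom of the generalization hierarchy does all the work: in each attribute column the LCA of three leaves is an ancestor of the LCA of any two of them, so by the parent-dominates-child cost axiom $c_a(\{i,j,k\})\ge c_a(\{i,j\})$. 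Summing over attributes yields $C_{i,j,k}/3 \ge C_{i,j}/2$ for each pair, and adding the three symmetric inequalities gives the required simplex inequality.

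What remains is the structural correspondence: any optimum $T$-anonymization can be refined at no additional cost into one whose equivalence classes all have size $2$ or $3$, by the same monotonicity argument (splitting a class of size $\ge 4$ can only pull each column's LCA downward in $T$, so by the cost axiom the per-row cost cannot increase). In that refined form, simplex matchings on $H$ stand in cost-preserving bijection with anonymizations, and invoking the polynomial-time algorithm of Anshelevich and Karagiozova yields the optimum. The principal obstacle, just as in Theorem~\ref{2anon}, is verifying the simplex inequality uniformly for an arbitrary hierarchy; it is precisely the parent-dominates-child axiom baked into the definition of a generalization hierarchy that makes the entire reduction carry over from the suppression setting to an arbitrary $T$, without changing either the algorithm or its running-time analysis.
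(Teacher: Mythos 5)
Your proposal follows essentially the same route as the paper's own (sketch) proof: build the identical simplex-matching hypergraph with generalization-hierarchy costs in place of star counts, and use the parent-dominates-child cost axiom to verify the simplex inequality via per-attribute LCA monotonicity, together with the same splitting argument for groups of size at least $4$. The only nitpick is that your per-attribute cost $|S|\cdot c_a(S)$ should be taken as $0$ when all rows of $S$ already agree on attribute $a$ (no symbol is replaced, so no cost is incurred under the paper's definition), but this does not affect the monotonicity argument or the correctness of the reduction.
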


\begin{proof} (Sketch) We define a hypergraph $H$ just as in Theorem~\ref{2anon}, but with new costs $C_{i,j}$ and $C_{i,j,k}$ reflecting the costs of a particular generalization hierarchy. One can still prove that the conditions for the simplex matching problem hold, using the fact that if $u$ is any ancestor of $v$, then $c(u) \geq c(v)$. This condition implies that if we have anonymized two rows $r_i, r_j$, adding a third row $r_k$ to be anonymized cannot {\em decrease} the cost of anonymization per row. That is, the particular generalization symbols needed to make all three rows identical could only cost more than the symbols needed to anonymize the two rows originally.  \QED \end{proof}

\section{$k$-Anonymity With Few Attributes}
\label{constantattr}

We now turn to studying the complexity of $k$-Anonymity with a constant number of attributes. First we show that for an unbounded alphabet, the $3$-anonymity problem is still hard even with only 27 attributes. We use the following {\sf MAX SNP}-hard problem in our proof.

\begin{definition}
{\sc Max 3DM-3} {\em (Maximum 3-Dimensional Matching With 3 Occurrences)}\\
\noindent {\sc Instance:} A set $M \subseteq W\times X \times Y$ of ordered triples where $W,X$ and $Y$ are disjoint sets.  The number of occurrences in $M$ of any element in $W,X$ or $Y$ is bounded by 3. Let $$C_{3DM}(M') = \frac{3|M'|}{{|W|+|X|+|Y|}}.$$
\noindent {\sc Goal:} Maximize $C_{3DM}(M')$ over all $M' \subseteq M$ such that no two elements of $M'$ agree in any coordinate.
\end{definition}

\begin{reminder}{\ref{3anon27}}
{\sc $3$-Anonymity} with just 27 attributes per record is {\sf MAX SNP}-hard. Therefore, {\sc $3$-Anonymity} does not have a polynomial time approximation scheme in this case, unless {\sf P $=$ NP}.
\end{reminder}

\begin{proof} To show 3-anonymity is {\sf MAX SNP}-hard, we show that there is an L-reduction from {\sc Max 3DM-3} to {\sc 3-Anonymity with 27 Attributes}~\cite{papadimitriou1988optimization}, since it is known that {\sc Max 3DM-3} is {\sf MAX SNP}-complete ~\cite{kann1991mbd}. 

Given a {\sc Max 3DM-3} instance $I = (M, W, X,Y)$, construct a {\sc 3-Anonymity} instance $D$ as follows:
\begin{enumerate}
\item Define $\Sigma = M \bigcup W \bigcup X \bigcup Y$, so that it contains a special symbol for each triple in $t \in M$ and each element $r \in  W \bigcup X \bigcup Y$.
\item Add a row to $D$ corresponding to each element $r_i \in  W \bigcup X \bigcup Y$, as follows. For $r \in W\bigcup X \bigcup Y$, let $t_{r,1}, t_{r,2},t_{r,3}\in M$ be the three triples of $M$ which contain $r$  (if there are less than three triples then simply introduce new symbols).
\smallskip
\begin{itemize}
\item  If $r \in W$ then add the following row to $D$:
\begin{center}
\begin{tabular}{|c|c|c|c|c|c|c|c|c|c|c|c|c|}
\hline
 $t_{r,1}$ & $t_{r,1}$ & $t_{r,1}$ & $t_{r,1}$ & $t_{r,1}$ & $t_{r,1}$ & $t_{r,1}$ & $t_{r,1}$ & $t_{r,1}$ & $t_{r,2}$ & $t_{r,2}$ & $\ldots$ & $t_{r,3}$\\
\hline
\end{tabular}
\end{center}
\smallskip
That is, the row contains nine copies of $t_{r,1}$, nine copies of $t_{r,2}$, then nine copies of $t_{r,3}$.

\item If $r \in X$, then add the row:
\begin{center}
\begin{tabular}{|c|c|c|c|c|c|c|c|c|c|c|c|c|}
\hline
$t_{r,1}$ & $t_{r,1}$ & $t_{r,1}$ & $t_{r,2}$ & $t_{r,2}$ & $t_{r,2}$ & $t_{r,3}$ & $t_{r,3}$ & $t_{r,3}$ & $t_{r,1}$ & $t_{r,1}$ & $\ldots$ & $t_{r,3}$\\
\hline
\end{tabular}
\end{center}

\item If $r \in Y$, then add the row:
\begin{center}
\begin{tabular}{|c|c|c|c|c|c|c|c|c|c|c|c|c|}
\hline
 $t_{r,1}$ & $t_{r,2}$ & $t_{r,3}$ & $t_{r,1}$ & $t_{r,2}$ & $t_{r,3}$ & $t_{r,1}$ & $t_{r,2}$ & $t_{r,3}$ & $t_{r,1}$ &$t_{r,2} $ & $\ldots$ & $t_{r,3}$ \\
\hline
\end{tabular}
\end{center}
\end{itemize}
\end{enumerate}

Suppose $w_i \in W, x_j \in X,y_k \in Y$ are arbitrary. Then the corresponding three rows in the database have the form:
\begin{center}

\begin{tabular}{|c||c|c|c|c|c|c|c|c|c|c|c|}
\hline
$w_i$ & $t_{w_i,1}$ & $t_{w_i,1}$ & $t_{w_i,1}$ & $t_{w_i,1}$ & $t_{w_i,1}$ & $t_{w_i,1}$ & $t_{w_i,1}$ & $t_{w_i,1}$ & $t_{w_i,1}$ & $\ldots$ & $t_{w_i,3}$\\
$x_j$ & $t_{x_j,1}$ & $t_{x_j,1}$ & $t_{x_j,1}$ & $t_{x_j,2}$ & $t_{x_j,2}$ & $t_{x_j,2}$ & $t_{x_j,3}$ & $t_{x_j,3}$ & $t_{x_j,3}$ & $\ldots$ & $t_{x_j,3}$\\
$y_k$ & $t_{y_k,1}$ & $t_{y_k,2}$ & $t_{y_k,3}$ & $t_{y_k,1}$ & $t_{y_k,2}$ & $t_{y_k,3}$ & $t_{y_k,1}$ & $t_{y_k,2}$ & $t_{y_k,3}$ & $\ldots$ & $t_{y_k,3}$ \\
\hline
\end{tabular}
\end{center}

Observe that $D$ has a total of $27n$ entries, where $n = |X|+|W|+|Y|$.  Recall $Cost^{\star}_3(D)$ is the optimal number of stars needed to $3$-anonymize $D$. It is useful to redefine $3$-Anonymity as maximization problem (where one maximizes the information released). Let $P$ be a $3$-anonymous solution to $D$, and define
$$C_{3ANON}(P) = 1-\frac{Cost^{\star}_3(P)}{27 n},$$ so that $OPT(D) = \max_P \{C_{3ANON}(P)\}$.

Suppose $D = \{r_1,\ldots,r_n\}$ is an instance of {\sc 3-Anonymity} obtained from the above reduction. Three properties are immediate from the construction of $D$: 
\begin{enumerate}
\item For any $x$ rows  $r_i,r_j,r_k,r_l$, where $x \geq 4$, the cost of anonymizing these rows is
$$C_{i,j,k,l} = 27x$$ because there is no alphabet symbol that is used in all $4$ rows.
\item If \mbox{$\{r_i,r_j,r_k\} \notin M$} then the cost of anonymizing the three corresponding rows is
$$C_{i,j,k} = 3\cdot 27 = 81$$ because there is no alphabet symbol that is used in all $3$ rows.
\item If \mbox{$\{r_i,r_j,r_k\} \in M$} then the cost of anonymizing the three corresponding rows is
$$C_{i,j,k} = 3\cdot 26 = 78$$
because the three rows match in exactly one of the 27 columns.
\end{enumerate}

These properties lead directly to the lemma:

\begin{lemma} \label{kAnonConstAttribReductionLemma}
There is a polynomial time mapping $g$ from 3DM-3 feasible solutions to 3-anonymity feasible solutions, such that if $M' \subseteq M$ is a 3DM-3 feasible solution then $C_{3DM}(M') = 27 C_{3ANON}(g(M'))$.
\end{lemma}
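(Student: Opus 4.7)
The plan is to construct $g$ by converting each triple of $M'$ into an anonymization block, and then sweeping the uncovered rows into one additional block. Concretely, for every $(w,x,y) \in M'$ the three rows $\{r_w,r_x,r_y\}$ become one class of the partition, and all rows whose underlying element is not covered by any triple of $M'$ are collected into a single extra class. Because $M'$ is a 3DM-feasible solution no element is used twice, so this is a legitimate partition of the rows of $D$, and each class has size at least $3$ (modulo the boundary issue discussed below), yielding a feasible $3$-anonymity solution.

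To establish the cost identity I would evaluate the star count class by class using the three structural properties already extracted from the construction of $D$. Each class arising from a triple $(w,x,y) \in M'$ consists of three rows that agree in exactly one of the $27$ columns, so by Property~3 it contributes exactly $78$ stars. The leftover ``dump'' class has size $n - 3|M'| \geq 4$ in the generic case, so by Property~1 it contributes $27(n - 3|M'|)$ stars. Summing gives $Cost^{\star}_3(g(M')) = 27n - 3|M'|$, and substituting this into the definitions of $C_{3ANON}$ and $C_{3DM}$ gives $27\,C_{3ANON}(g(M')) = 3|M'|/n = C_{3DM}(M')$ after routine algebra.

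The main obstacle I anticipate is the degenerate boundary case $n - 3|M'| \in \{1,2,3\}$. Assuming WLOG that $|W|=|X|=|Y|$ (by padding with dummy elements, which is standard for $3$DM-$3$), $3 \mid (n-3|M'|)$, eliminating the sizes $1$ and $2$. Only $n - 3|M'|=3$ then needs care: if the three leftover rows happen to form an unused triple of $M$, Property~3 would give $78$ rather than $81$ stars for that class and break the identity. I would patch $g$ in that case by introducing three ``gratuitous'' stars in the dump class, which keeps $g(M')$ feasible and forces its cost to be exactly $81 = 27\cdot 3$. With these patches $g$ is clearly polynomial-time computable, and the claimed identity $C_{3DM}(M') = 27\,C_{3ANON}(g(M'))$ holds for every feasible $M'$.
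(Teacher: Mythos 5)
Your proposal is correct and follows essentially the same route as the paper's proof: group the rows by the triples of $M'$, put the uncovered rows into residual groups, use the three cost properties to get $Cost^{\star}_3(g(M')) = 27n - 3|M'|$, and finish with the same algebra. The only difference is that you explicitly handle the degenerate cases (a leftover class of size $1$, $2$, or an accidental unused triple), which the paper's terser argument glosses over; your patches for these are sound.
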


The proof of Lemma~\ref{kAnonConstAttribReductionLemma} is given in Appendix~\ref{kAnonConstAttribReductionLemmaProof}.

It remains for us to show that the above reduction is in fact an $L$-reduction.  Let $I$ be a {\sc Max 3DM-3} instance,  with corresponding 3-Anonymity instance $f(I)$, and set $\alpha = \frac{1}{27}, \beta = 27$.  Now by Lemma~\ref{kAnonConstAttribReductionLemma}
$$OPT(f(I)) = \frac{1}{27} OPT(I) \leq \alpha OPT(I)$$
so that condition (1) of an L-reduction holds.  Similarly, if we have a solution of $f(I)$ of cost $c_2$, then again by Lemma~\ref{kAnonConstAttribReductionLemma} we can quickly compute a solution to $I$ of cost $c_1 = 27 c_2$.  Therefore,
$$|OPT(I)-c_1| = |27  OPT(f(I))-27 c_2| = \beta|OPT(f(I))-c_2|$$
so that condition (2) also holds. \QED
\end{proof}

\smallskip

To complement the above bad news, we now give an efficient algorithm for optimal $k$-anonymity when the number of attributes and the size of the alphabet are both small. Along the way, we also give an algorithm for the general $k$-anonymity problem that runs in roughly $4^n$ time (again $n$ is the number of rows). 

A naive algorithm for $k$-anonymity would take an exorbitant amount of time, trying all possible partitions of $n$ rows into groups with cardinality between $k$ and $2k-1$. We can reduce this greatly using a divide-and-conquer recursion.

\begin{reminder}{\ref{kAnonAlg}} 
For every $k > 1$, an optimal $k$-anonymity solution can be computed in $O(4^n \poly(n))$ time, where $n$ is the total number of rows in the database.
\end{reminder}

\begin{proof}  Interpret our $k$-anonymity instance $S$ as a multiset of $n$ vectors drawn from $|\Sigma|^{\ell}$.  Define $S_k = \{T: T\subseteq S, |T| \in [\frac{n}{2},\frac{n}{2}+2k]\}$. That is, $S_k$ contains all multisubsets which have approximately $n/2$ elements. Then \begin{equation}\label{kanonalg} Cost_k(S) = argmin_{T \in S_k} \left[Cost_k(S-T) + Cost_k(T)\right],\end{equation}  where $Cost_k(S)$ is the cost of the optimal $k$-anonymous solution for $S$. Equation \eqref{kanonalg} holds because (without loss of generality) any $k$-anonymized group of rows in a database is at most $2k-1$, so we can always partition the $k$-anonymized groups of a database into two multisets where their cardinalities are in the interval $[n/2-2k,n/2+2k]$. 

Suppose we compute the optimal $k$-anonymity solution by evaluating equation \eqref{kanonalg} recursively, for all eligible multisubsets $T$. In the base case when $|S| \in [k,2k-1]$, we make all rows in $S$ identical and return that solution.
 
We can simply enumerate all $2^n$ multisubsets of $S$ to produce all possible $T$ in equation~\eqref{kanonalg}. The time recurrence of the resulting algorithm is 
\[T(n) \leq 2^{n+1} \cdot T(n/2 + 2k) + 2^n.\]
This recurrence solves to $T(n) \leq O(\log n \times 2^{\log n} 2^{n + \frac{n}{2}+\frac{n}{4}+\cdots + 1}) \leq O(4^n \cdot \poly(n))$ for constant $k$. \QED
\end{proof}

\begin{reminder}{\ref{kAnonAlg2}}
Let $\ell$ be the number of attributes in a database, let $c$ be the size of its alphabet, and let $n$ be the number of rows.  Then k-Anonymity can be solved in $2^{O(k^2 (2c)^{\ell})} + O(n\ell)$ time.  
\end{reminder}

If $\ell$ and $c$ are constants then there are at most $c^\ell$ possible rows.  To specify a group of k-anonymized rows we write $G = <r',t>$ where $t$  is the number of times the anonymized row $r'$ occurs in the group.  We can think of a k-anonymous solution as a partition of the rows into such anonymized groups. The following lemma will be useful for our algorithm.
 
\begin{lemma} \label{constAttributeAndAlphabetLemma}
Suppose that our database $D$ contained at least $k(2k-1)\times 2^\ell$  copies of a row $r$.  Then the optimal $k$-anonymity solution must contain a group containing just row $r$, ie. $G = <r,t>$ where $t \geq k$.
\end{lemma}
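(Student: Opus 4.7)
The plan is proof by contradiction: assume there is an optimal $k$-anonymization of $D$ in which no group consists solely of copies of $r$, and build a cheaper solution via pigeonhole followed by an exchange. At the outset I would invoke the standard normalization already used in the proof of Theorem~\ref{2anon}: WLOG every group in the optimum has size at most $2k-1$, since any larger group whose rows share a single anonymized pattern can be split into pieces of size in $[k, 2k-1]$ for free.

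First I would count the groups that touch $r$. Since $D$ contains at least $k(2k-1)\cdot 2^\ell$ copies of $r$ and no group holds more than $2k-1$ rows, at least $k\cdot 2^\ell$ groups contain a copy of $r$. The anonymized pattern of any such group must agree with $r$ in every unstarred coordinate, so it is one of only $2^\ell$ possible patterns obtained from $r$ by starring some subset of its $\ell$ attributes. A pigeonhole step then yields a pattern $p^\star$ shared by at least $k$ of these groups, say $G_1,\ldots,G_j$ with $j \geq k$. Because by assumption none of the $G_t$ is a pure-$r$ group, each $G_t$ contains at least one row different from $r$, which forces $p^\star$ to contain at least one star; write $s \geq 1$ for the number of stars in $p^\star$.

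Next comes the exchange. I would pool all rows of $G_1 \cup \cdots \cup G_j$ and re-partition them into two new groups: a group $G^\star$ consisting of every copy of $r$ in the pool (of size at least $j \geq k$, with anonymization cost $0$) and a group $G'$ consisting of the remaining rows (of size at least $j \geq k$ since each $G_t$ contributes at least one non-$r$ row, and still anonymizable to $p^\star$ at the same per-row cost $s$). The pool's cost thus drops from $|\mathrm{pool}|\cdot s$ to $(|\mathrm{pool}|-|G^\star|)\cdot s$, a strict saving of at least $k \cdot s \geq k$ stars, contradicting optimality.

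The main obstacle, and the only reason the threshold $k(2k-1)\cdot 2^\ell$ takes that particular shape, is making three counts line up simultaneously. The factor $2k-1$ forces enough groups to contain a copy of $r$; the factor $2^\ell$ is consumed by pigeonhole across the possible anonymized generalizations of $r$; and the final factor $k$ ensures that after the exchange both the pure-$r$ group and the leftover group meet the $k$-anonymity size requirement. Once these counts are aligned, the cost comparison is immediate.
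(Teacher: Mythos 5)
Your proof is correct and follows essentially the same route as the paper's: normalize to groups of size at most $2k-1$, count at least $k\cdot 2^\ell$ groups containing a copy of $r$, pigeonhole over the $2^\ell$ possible starred patterns of $r$ to find $k$ groups with a common anonymized row, and then split off the copies of $r$ to save at least $k$ stars. The only (immaterial) difference is that you move \emph{all} copies of $r$ in the pooled groups into the new pure-$r$ group rather than exactly $k$ of them, and you explicitly verify that both resulting groups have size at least $k$, which the paper leaves implicit.
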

\begin{proof}
Suppose for contradiction that our database contains more than $ k(2k-1)\times 2^\ell$ copies of row $r$, but that our optimal solution did not contain a group $G = <r,t>$.  Without loss of generality we can assume $k \leq t \leq 2k-1$ for each group since larger groups could be divided into two groups without increasing the cost. Therefore, we must have at least $k \times 2^\ell$ groups $G = <r',t>$ which contain the row $r$.  Notice that each attribute of $r'$ either matches $r$ or is a $\star$.  Hence, there are at most $2^\ell$ possible values of $r'$. By the pigeonhole principle there must be at least $k$ groups $G_i = <r',t_i>$ containing $r$ whose anonymized rows $r'$ are all identical.  Merge these groups into one big group \mbox{$G = <r', \Sigma_{i=1}^k t_i>$} at no extra cost.  Each of the original $k$ groups contained at least one copy of the row $r$ so we can split $G$ into two groups: $<r,k>$ and \mbox{$G' = <r',\Sigma_{i=1}^k t_i - k>$} while saving at least $k$ stars.  Hence, our original solution was not optimal.  Contradiction! \QED
\end{proof}

For each row $r$ we can define $Index(r)$ to be a unique index between $0$ and $c^l-1$ by interpreting $r$ as a $\ell$ digit number base $c$. Using Lemma \ref{constAttributeAndAlphabetLemma}, the following algorithm can be used to obtain a {\em kernelization} of a instance of k-Anonymity, in the sense of parameterized complexity~\cite{flum2006parameterized}. 

\begin{algorithm}
\caption{k-anonymize a database D with small alphabet and few attributes}
\label{constAttributeAndAlphabetAlgorithm}
\begin{algorithmic}
\REQUIRE $rowCount[i] = \|\{r\in D|Index(r) = i\} \|$
\REQUIRE $c, \ell$ small
\STATE $T \leftarrow  k(2k)\times 2^\ell$ 
\FOR{Row $r \in D$}
\STATE $i \leftarrow Index(r)$
\IF{$rowCount[i] > T $ }
 \PRINT "$<r, k>$"  \COMMENT{By Lemma \ref{constAttributeAndAlphabetLemma}}
 \STATE $rowCount[i] \leftarrow rowCount[i]-k$
\ENDIF
\ENDFOR

\COMMENT{Now $\forall i, rowCount[i] \leq T$ so there are at most $m \leq k(2k-1)2^\ell(c^\ell) = k(2k-1)(2c)^\ell$ rows remaining}

\end{algorithmic}
\end{algorithm}

\begin{lemma} \label{constAttributeAndAlphabetLemma2}
Algorithm \ref{constAttributeAndAlphabetAlgorithm} runs in $O(n \ell)$ time on a database $D$ and outputs a database $D'$ with at most $O(k^2(2c)^\ell)$ rows, with the property that an optimal k-anonymization for $D'$ can be extended to an optimal k-anonymization for $D$ in $O(n \ell)$ time.
\end{lemma}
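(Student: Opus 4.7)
The plan is to dispatch Lemma~\ref{constAttributeAndAlphabetLemma2} in three pieces: the $O(n\ell)$ running time, the $O(k^2(2c)^\ell)$ bound on $|D'|$, and the correctness of the kernelization.

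The running time is a warm-up. I would store $rowCount$ sparsely (e.g., as a hash map keyed by the at most $n$ distinct values of $Index(r)$ that actually occur in $D$), so that initializing the counts and every subsequent lookup or update costs $O(1)$ amortized, while computing each $Index(r)$ costs $O(\ell)$. The main loop visits each row of $D$ once, giving $O(n\ell)$ in total.

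For the size bound I would establish the invariant that upon termination, $rowCount[i] \le T = k(2k)2^\ell$ for every index $i$. Each encounter of a row $r$ in the loop either leaves $rowCount[i]$ unchanged or subtracts $k$, and a short check shows that the number of encounters of $r$ (namely the original multiplicity of $r$ in $D$) is at least the number of extractions required to pull $rowCount[i]$ down to $T$. Summing over the at most $c^\ell$ nonzero indices then yields $|D'| \le c^\ell \cdot T = k(2k)(2c)^\ell = O(k^2(2c)^\ell)$.

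The main obstacle is correctness: showing that any optimal $k$-anonymization of $D'$, combined with the groups printed by the algorithm, is optimal for $D$. I would prove by induction on $j$ that if $D_j$ denotes the residual instance after the first $j$ groups are printed, then $\mathrm{OPT}(D_j) = \mathrm{OPT}(D)$ and some optimal $k$-anonymization of $D$ contains each of the first $j$ printed groups verbatim. The inductive step exploits the slack $T = k(2k)2^\ell > k(2k-1)2^\ell$: at the moment the $(j{+}1)$-st extraction $\langle r,k\rangle$ is decided, $D_j$ contains more than $k(2k-1)2^\ell$ copies of $r$, so the merge-and-split construction inside the proof of Lemma~\ref{constAttributeAndAlphabetLemma} produces an optimal $k$-anonymization of $D_j$ that contains the pure-$r$ group $\langle r,k\rangle$ exactly. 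Deleting this zero-cost group yields a $k$-anonymization of $D_{j+1}$ of the same cost, which must itself be optimal, for otherwise re-adjoining $\langle r,k\rangle$ would beat $\mathrm{OPT}(D_j)$. When the loop ends, $D_j = D'$, so extending any optimum of $D'$ by prepending the printed groups yields an optimum of $D$; writing out the resulting $n \times \ell$ table costs $O(n\ell)$.
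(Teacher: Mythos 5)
Your proof takes essentially the same route as the paper's own (explicitly labelled ``sketch'') argument: hash-table bookkeeping for the $O(n\ell)$ time bound, a counting/invariant argument for the kernel size, and iterated application of Lemma~\ref{constAttributeAndAlphabetLemma} together with the slack $k(2k)2^\ell > k + k(2k-1)2^\ell$ for correctness --- you merely make explicit the induction over printed groups that the paper leaves implicit. The one imprecision you share with the paper is that Lemma~\ref{constAttributeAndAlphabetLemma} only guarantees a pure group $\langle r,t\rangle$ with $t\ge k$ rather than $\langle r,k\rangle$ exactly, so strictly speaking one still needs a one-line merge-and-resplit of two pure groups (which exist by the same pigeonhole count) to extract a group of exactly $k$ copies.
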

That is, for the parameter $k+c+\ell$, the k-anonymity problem is not only fixed parameter tractable, but can also be efficiently kernelized.
\begin{proof} (Sketch) By implementing rowCount as a hash table, each $Index(r)$ and lookup opperation takes $O(\ell)$ time.  Hence, set up takes $O(n \ell)$ time as does the loop.  By lemma \ref{constAttributeAndAlphabetLemma} there must be an optimal k-anonymity solution containing $<r,t>$ with $t \geq k$ whenever $r$ occurs at least $k(2k-1)2^\ell$ times in $D'$.  Therefore, if $r$ occurs more than $k(2k)2^\ell > k+k(2k-1)2^\ell$ times in $D$ then there is an optimal k-anonymity solution which contains the groups $<r,k>$ and $<r,t>$ so adding back $k$ copies of row $r$ to $D'$ does not change the optimal k-anonymization except for the extra $<r,k>$ group.   \QED
\end{proof}

\begin{proofof}{Theorem~\ref{kAnonAlg2}}
By lemma \ref{constAttributeAndAlphabetLemma2}, Algorithm \ref{constAttributeAndAlphabetAlgorithm} takes an arbitrary $k$-anonymity instance $D$ and reduces it to a new instance $D'$ with at most $ k(2k)(2c)^\ell$ rows in time $O(n)$.  We can then apply Theorem \ref{kAnonAlg} to k-anonymize $D'$ in time $O(4^m \poly(m))$. The total running time is $2^{O(k^2 (2c)^\ell)} + O(n\ell)$.
\end{proofof}

\section{Hardness of 3-Anonymity With Binary Attributes}
\label{3anonbinary}
In 2005, Aggarwal {\em et al.}~\cite{aggarwal2005at} showed that $3$-anonymity with a ternary alphabet is {\sf NP}-hard. Their proof of hardness gives a reduction from {\sc Edge Partition Into Triangles}, in which one is given a graph and is asked to determine if the edge set $E$ can be partitioned into $3$-sets such that each set corresponds to a copy of $K_3$. In particular, Aggarwal {\em et al.} first present a reduction from the problem of  {\sc Edge Partition Into Triangles And 4-Stars}\footnote{This problem is: Given a graph $G = (V,E)$, is it possible to partition the edge set $E$ into $3$-sets such that each $3$-set corresponds to either a copy of $K_3$ or a $4$-star?} into {\sc Binary 3-Anonymity}. Then they introduce a third alphabet symbol to distinguish 4-stars from triangles in the reduction, concluding that a   {\sc Ternary 3-Anonymity} algorithm can be used to solve {\sc Edge Partition Into Triangles}. 

We shall strengthen this result by directly proving that the {\sc Edge Partition Into Triangles And 4-Stars} problem is  {\sf NP}-Complete. In fact, we establish the hardness of edge partitioning into 4-stars on {\em triangle-free} graphs. Using the aforementioned reduction of Aggarwal {\em et al.}, the hardness of {\sc Binary 3-Anonymity} follows from this result.

\begin{reminder}{\ref{edgepart}} {\sc Edge Partition Into 4-Stars} is {\sf NP}-Complete, even for triangle-free graphs. \end{reminder}

We describe the setup for Theorem~\ref{edgepart} in the following paragraphs. The reduction will be from {\sc 1-in-3 Sat}, which is well-known to be {\sf NP}-Complete~\cite{schaefer1978csp}. Recall that in the {\sc 1-in-3 Sat} problem, we are given a 3-CNF formula $\phi$ and are asked if there is a satisfying assignment to $\phi$ with the property that \emph{exactly} one literal in each clause is true. We call a yes-instance of the problem {\em 1-in-3 satisfiable}. Given a formula $\phi$, the idea of our reduction is to create triangle-free graph gadgets-- a gadget for each variable, and another type of gadget for each clause-- and connect them in a (triangle-free) way such that $\phi$ is 1-in-3 satisfiable if and only if the resulting graph can be edge-partitioned into 4-stars. We first define a type of graph that shall be used to simulate the truth assignment of a variable in $\phi$.

\begin{definition} Let $d \in \mathbb{N}$ be given. The graph \emph{$G_d$} is formed by taking two 3-Binary trees of depth $d$, deleting a leaf from exactly three different parents in each tree, and adding three edges so that the parents of deleted leaves in one tree are matched with the parents of deleted leaves in the other tree.
\end{definition}

In a copy of $G_d$, we consider all edges adjacent to leaves to be \emph{shared edges}, while all other edges are considered \emph{private}. Intuitively, the shared edges are those that are {\em shared} with other gadgets in our final graph. We say that $G$ contains a {\em share-respecting copy of $G_d$} if its vertex set can be partitioned into two sets $S$ and $T$ such that $S$ is an induced copy of $G_d$, and all edges crossing the cut $(S,T)$ are adjacent to shared edges in $S$.

To distinguish between the two trees in a copy of $G_d$, they are arbitrarily designated as the {\em top tree} and {\em bottom tree}, respectively.

The key property of the gadget $G_d$ is given by the following claim, which says that (in a certain sense) the edges of $G_d$ can be partitioned into $4$-stars in precisely two ways.  Figure~\ref{variableGadget} illustrates $S_5$, where the dashed edges are shared and the solid edge is private.  It can be found in Appendix~\ref{edgePartInto4StarsExamples}.

\begin{lemma}  \label{truefalse} Let $G$ be a graph containing a share-respecting copy of $G_d$. Assuming there is an edge partition of $G$ into $4$-stars, exactly one of two cases must hold for that partition:  
\begin{enumerate}
\item All shared edges belonging to the top tree of $G_d$ are contained in $4$-stars with centers in $G_d$, while all shared edges belonging to the bottom tree are contained in 4-stars with centers not contained in $G_d$.
\item All shared edges belonging to the bottom tree of $G_d$ are contained in 4-stars with centers in $G_d$, while all shared edges belonging to the bottom tree are contained in 4-stars with centers not contained in $G_d$.
\end{enumerate}
\end{lemma}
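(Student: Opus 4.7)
My plan is to convert a 4-star edge partition of $G$ into a proper $2$-coloring of a subgraph of $G_d$, and then show that this coloring has essentially only two possibilities, corresponding to the two cases of the lemma.

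First, I set up a local indicator. A 4-star has exactly three edges, all meeting at a single center, so for any vertex $v$ of degree exactly $3$ in $G$, either all three edges at $v$ form one 4-star centered at $v$, or each of its edges lies in a 4-star centered at the respective other endpoint; let $c(v) \in \{0,1\}$ record which case occurs. The share-respecting hypothesis forces every edge crossing $(S,T)$ to be adjacent to some shared edge, and in the intended use of $G_d$ this restricts external attachments to the leaves of $G_d$; consequently the two roots, the interior tree vertices, and the three parents of deleted leaves in each tree all have degree exactly $3$ in $G$. Call this set of non-leaf vertices $H$; then $c$ is well-defined on $H$.

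Second, I convert this into a $2$-coloring problem. For any edge $(u,v)$ of $G_d$ with both endpoints in $H$, the 4-star containing $(u,v)$ is centered at exactly one of its endpoints, forcing $c(u)+c(v)=1$. Thus $c$ is a proper 2-coloring of the induced subgraph $G_d[H]$. This subgraph is the union of the two tree-skeletons (each tree with its depth-$d$ leaves removed) together with the three matching edges between the depth-$(d-1)$ vertices, and is connected. A proper 2-coloring on a connected bipartite graph is unique up to complementation. A direct depth-parity computation on each tree shows that the intra-tree constraints are consistent, and that the three matching edges all collapse to the single constraint $c(r_1)\neq c(r_2)$, where $r_1,r_2$ are the two roots. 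Hence exactly two valid colorings exist: $(c(r_1),c(r_2))=(0,1)$ and $(1,0)$.

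Third, I read off the conclusion at the shared edges. Let $(p,\ell)$ be a shared edge with $p$ a depth-$(d-1)$ parent. If $c(p)=1$, the 4-star containing $(p,\ell)$ is centered at the interior vertex $p$ of $G_d$, giving a ``center in $G_d$'' as in the lemma. If $c(p)=0$, the star is centered at $\ell$, and because $\ell$ has only one internal edge in $G_d$, its other two edges must be external edges at $\ell$, so the star is anchored outside the interior of $G_d$ and matches the lemma's ``center not contained in $G_d$'' branch. Because within either valid coloring all depth-$(d-1)$ parents of a single tree share a common $c$-value and the two trees receive opposite $c$-values, the two admissible colorings produce exactly the two cases stated in the lemma.

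The main obstacle I anticipate is justifying the uniform degree-3 claim for vertices of $H$: the share-respecting definition only forbids external edges from landing far from shared edges, so a priori a depth-$(d-1)$ parent could acquire external attachments and fail to have degree exactly $3$. I would address this either by arguing that in the reduction of Theorem~\ref{edgepart} all external attachments are arranged to occur only at leaves of $G_d$, or by extending $c$ to a more careful ``balance'' statement on higher-degree parents using the fact that any excess degree still has to decompose into groups of three at the center, which together with the matching-edge parity continues to force the two-case dichotomy.
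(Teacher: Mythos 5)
Your proof is correct and is essentially the same argument as the paper's: your proper $2$-coloring of the internal degree-$3$ subgraph, unique up to complementation by connectivity, is exactly the paper's observation that $4$-star centers propagate through each tree by depth parity (giving two scenarios per tree), with the three matching edges forcing the two trees into opposite scenarios. The degree-$3$ issue you flag at the end is glossed over in the paper as well, which simply asserts that every vertex of $G_d$ has degree $3$ and implicitly relies on the intended construction of $G_\phi$ to justify it.
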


In the first case of the claim, we say that the copy of $G_d$ is {\em true partitioned}, and in the second case we say that $G_d$ is {\em false partitioned}.  Intuitively, each copy of $G_d$ in our final graph will correspond to a variable in $\phi$, and a {\em true/false} partition shall correspond to assigning that variable {\em true/false}.  Lemma \ref{truefalse} is proved in Appendix~\ref{proofOfGadgetClaim}.

We now define another type of graph that shall be used as gadgets to represent clauses in a given 1-in-3 SAT formula.

\begin{definition} The graph $S_5$ is a 5-star with one of its edges labeled \emph{private} and the other three edges labeled \emph{shared}.\end{definition}

Figure~\ref{4StarClauseGadget} illustrates $S_5$, where the dashed edges are shared and the solid edge is private.  It can be found in Appendix~\ref{edgePartInto4StarsExamples}.

Suppose a graph $G$ contains a share-respecting copy of $S_5$\footnote{A copy of $S_5$ is \em{share respecting} if and only the center vertex has degree $4$ and the leaf incident to the private edge has degree $1$.}, so that one node adjacent to the private edge of $S_5$ has degree one. Call this node $v$ and its adjacent node $u$ (the center of $S_5$). Then, any partition of $G$ into $4$-stars must contain a $4$-star with $u$ as its center, using the edge $(u,v)$. But this $4$-star must use two of the shared edges in $S_5$. Therefore an edge-partition of $G$ into $4$-stars is possible if and only if exactly one of the shared edges in $S_5$ participates in a $4$-star with a center that is outside of $S_5$.

We are finally ready to prove Theorem~\ref{edgepart}.

\begin{proofof}{Theorem~\ref{edgepart}} Let an {\sc 1-in-3 Sat} instance $\phi$ be given with clauses $C_1,\ldots,C_m$ and variables $x_1,\ldots,x_n$. We wish to create a triangle-free graph $G_{\phi}$ that can be edge-partitioned into 4-stars if and only if $\phi$ is 1-in-3 satisfiable.

$G_\phi$ is constructed as follows:\begin{itemize} \item For each variable $x_i$, let $k_i$ denote the number of clauses that $x_i$ occurs in (or the number of clauses that $\bar{x_i}$ occurs in, whichever is greater). Let $d_i$ be the integer satisfying $3 \cdot 2^{d_i-2} < 3(k_i+1) \leq 3 \cdot 2^{d_i-1}$. Add a copy of the graph $G_{d_i}$ to $G_{\phi}$, calling it $A_i$. Note that $A_i$ has at least $3(k_i + 1)$ leaves. \item For each clause $C_i = (l_1 \vee l_2 \vee l_3)$, add three copies of $S_5$ to $G_{\phi}$, calling them $B_{i,1},B_{i,2},B_{i,3}$. \end{itemize} Join the shared edges of these subgraphs as follows: if the literal $l_j = x_k$ is in $C_i$, then merge one shared edge from each of $B_{i,1},B_{i,2},B_{i,3}$ with three shared edges from the top tree of $A_i$; otherwise, if $l_j = \bar{x_k}$ is in $C_i$, then merge a shared edge from each of $B_{i,1},B_{i,2},B_{i,3}$ with three shared edges from the bottom tree of $A_i$.  As a heuristic use a shared edges which is incident to another unused shared edge in $G_{d_k}$, whenever possible.

Since $A_i$ has a $3 \cdot 2^{d_i-1} \geq 3(k_i+1)$ leaves, there may remain some shared edges in some $A_i$ that have not been merged with shared edges from copy of $S_5$. We deal with these extra shared edges as follows: Take three shared edges from different parents in the top tree of $A_i$, and merge their end vertices with a new vertex to form a $4$-star.  Repeat until all unused shared edges from the top are used and do the same for the shared edges on the bottom.  Note that this is possible because we used three copies of $S_5$ for each clause; hence, shared edges from the top/bottom of each gadget are taken in multiples of three, and the number of leaves in every $A_i$ is a multiple of three.  By the above heuristic for choosing unused shared edges we will never create a multi edge.

Clearly the above reduction can be done in polynomial time.  Also note that by construction, $G_\phi$ contains no triangles. We now argue that the formula $\phi$ is 1-in-3 satisfiable if and only if $G_\phi$ can be edge-partitioned into $4$-stars. Supposing that $\phi$ is satisfiable, partition each variable gadget according to its assignment in a given satisfying assignment. In particular, if a variable is set to true, then {\em true-partition} the edges in its corresponding variable gadget. Each clause gadget can be partitioned into a $4$-star, since exactly one of its shared edges are used.  The remaining edges are already part of $4$-stars by construction and can hence be partitioned.

For the other direction, suppose that $G_\phi$ can be partitioned into 4-stars. By Claim~\ref{truefalse}, each copy of $G_d$ is either true or false partitioned.  Now each clause gadget can be partitioned if and only if exactly one of its shared edges is used by a 4-star with a center in a variable gadget. By construction, this happens iff exactly one of the literals in the clause was assigned true in the partition for its variable gadget. Thus the partition defines a satisfying assignment for $\phi$.

Finally, note that the $G_{\phi}$ constructed in Theorem~\ref{edgepart} is triangle-free; in particular, $G_{\phi}$ is bipartite. To see this, note that each $A_i$ is bipartite, each $B_{i,j}$ is bipartite, and for each of these subgraphs, its set of shared edges come from only one side of its bipartition. 
\end{proofof}

While we have given a complete description above, the construction of $G_{\phi}$ is perhaps better understood through examples. We have provided two examples in  Appendix~\ref{edgePartInto4StarsExamples}.

\begin{corollary}
{\sc Edge Partition Into Triangles and 4-Stars} is {\sf NP}-Complete.
\end{corollary}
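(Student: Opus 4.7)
My plan is to derive the corollary directly from Theorem~\ref{edgepart} via a trivial reduction. Membership in {\sf NP} is immediate: a certificate is the partition itself, and in polynomial time one can verify that every part is either a $K_3$ or a $4$-star and that the parts cover each edge of $G$ exactly once.

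For {\sf NP}-hardness, I would reduce {\sc Edge Partition Into 4-Stars} on triangle-free graphs (which is {\sf NP}-complete by Theorem~\ref{edgepart}) to {\sc Edge Partition Into Triangles and 4-Stars} via the identity map. The key observation is that if $G$ is triangle-free, then $G$ contains no copy of $K_3$ as a subgraph, so any collection of edge-disjoint triangles in $G$ is empty. Consequently, every valid partition of $E(G)$ into triangles and $4$-stars must consist entirely of $4$-stars. Hence, when the input $G$ is triangle-free, $G$ is a yes-instance of {\sc Edge Partition Into Triangles and 4-Stars} if and only if $G$ is a yes-instance of {\sc Edge Partition Into 4-Stars}.

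There is essentially no obstacle: the corollary simply records that the stronger hardness statement of Theorem~\ref{edgepart} (restricted to triangle-free inputs and to $4$-stars only) immediately subsumes the hardness of the more general partition problem. This is precisely the version that, combined with the reduction of Aggarwal \emph{et al.}~\cite{aggarwal2005at} mentioned earlier in the excerpt, yields an alternative proof that {\sc Binary 3-Anonymity} is {\sf NP}-hard.
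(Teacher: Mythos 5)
Your proposal is correct and is exactly the route the paper intends: the corollary is stated without a separate proof precisely because Theorem~\ref{edgepart} establishes hardness on triangle-free graphs, where no triangle can appear in any edge partition, so the identity map is the reduction. Nothing further is needed.
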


\begin{corollary}
{\sc Binary 3-Anonymity} is {\sf NP}-Complete.
\end{corollary}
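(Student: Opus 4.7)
My plan is to derive this corollary by composing the preceding corollary (NP-completeness of {\sc Edge Partition Into Triangles and 4-Stars}) with the reduction of Aggarwal \emph{et al.} already cited at the beginning of Section~\ref{3anonbinary}. Membership in {\sf NP} is immediate, since a certificate is simply the grouping of rows together with the star-positions in each group, and one checks in polynomial time that every group has size at least three, every row in the group matches the anonymized row, and the total number of stars is at most the budget.

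For hardness I would recall the reduction explicitly. Given a graph $G = (V,E)$ with $|V| = n$ and $|E| = m$, construct a binary database $D_G$ with $m$ rows and $n$ columns: the row for edge $e = \{u,v\}$ has a $1$ in columns $u$ and $v$ and a $0$ in every other column. I then claim that $G$ can be edge-partitioned into triangles and $4$-stars if and only if $D_G$ admits a $3$-anonymization of cost at most $m$, completed by the budget $m$ being tight precisely on such partitions.

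The key calculation is the per-row star cost of a group of three rows. If the three rows correspond to three edges forming a triangle on $\{u,v,w\}$, each of the three columns $u,v,w$ contains the pattern $(1,1,0)$ up to permutation and all other columns contain $(0,0,0)$, so exactly three stars suffice (one per shared column). If the three edges form a $4$-star with center $c$ and leaves $v_1,v_2,v_3$, column $c$ is all-$1$ and each leaf column contains a single $1$, again yielding three stars total. For any other configuration of three edges---say they span four or more vertices but do not form a $4$-star---at least four columns contain a non-constant pattern, so strictly more than three stars are required; analogous arguments show that any group of $t \geq 4$ rows requires strictly more than $t$ stars (since the rows span at least five vertices, none of which is common to all rows). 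Hence the minimum-cost $3$-anonymization of $D_G$ has cost exactly $m$ iff every group in the optimal partition is a triangle or $4$-star, which is iff $E$ admits such an edge partition.

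I do not anticipate any real obstacle: the reduction and its analysis are standard, and the only new ingredient is that the source problem is now known to be {\sf NP}-complete (via Theorem~\ref{edgepart} and its corollary on triangle-free graphs). The one small care point is to verify the strict inequality for non-triangle, non-$4$-star groups and for groups of size larger than three, so that the budget $m$ forces the $3$-anonymization to correspond exactly to an edge partition into triangles and $4$-stars; this is a routine casework on the number of distinct vertices spanned by the rows in a group. Composing with the corollary then yields {\sf NP}-completeness of {\sc Binary 3-Anonymity}.
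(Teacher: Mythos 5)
Your overall strategy is exactly the paper's: establish membership in {\sf NP}, then compose the Aggarwal \emph{et al.} edge--vertex incidence reduction with the new {\sf NP}-completeness of {\sc Edge Partition Into Triangles and 4-Stars}. However, your cost accounting contains a genuine error. In the suppression model a non-constant column of a group must be starred in \emph{every row} of that group, so a group of $t$ rows with $c$ non-constant columns costs $c\cdot t$ stars, not $c$ stars. A triangle or $4$-star group of three edges has exactly $3$ non-constant columns and therefore costs $9$ stars ($3$ per row), so the correct budget is $3m$, as in the paper's Lemma~\ref{AggarwalReductionLemma}, not $m$.

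This is not merely a constant-factor slip: under your per-column counting the reverse direction fails. Your claim that any group of $t\geq 4$ rows needs strictly more than $t$ stars is false in your own accounting --- four edges forming a $5$-star share a common center (contradicting your parenthetical that no vertex is common to all rows, and they need not span five vertices either, e.g.\ a $4$-cycle), and such a group has exactly $4$ non-constant columns, i.e.\ exactly $t$ ``stars'' by your count, so it would fit within your budget of one star per row even though it does not correspond to a triangle or $4$-star. With the correct cell-level count the argument is saved: every group of $t\geq 4$ edges, and every $3$-edge group that is neither a triangle nor a $4$-star, has at least $4$ non-constant columns and hence costs at least $4$ stars per row, so the optimum is $3m$ if and only if $G$ edge-partitions into triangles and $4$-stars. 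You should restate the budget as $3m$ and redo the casework in terms of stars per row; everything else in your proposal matches the paper's argument.
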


\begin{proof}
Aggarwal {\em et al.}\cite{aggarwal2005at} showed that there is a polynomial time reduction from {\sc Edge Partition Into Triangles And 4-Stars} to {\sc Binary 3-Anonymity}.  Their reduction is repeated in Appendix~\ref{edgePartIntoTrianglesAnd4StarsToBinary3Anonymity} for completeness.  \QED
\end{proof}

\section{Hardness of Computing $\ell$-diversity}

\label{diversity}

Finally, we consider an alternative privacy model called $\ell$-diversity, which strengthens the privacy guarantees of the $k$-anonymity model. It was first proposed to prevent certain background knowledge attacks which could potentially be used against a $k$-anonymized dataset~\cite{machanavajjhala:dpb}. In the model, we distinguish between which attributes of the database are merely potentially identifying and which are highly sensitive. Those which are highly sensitive require a strong privacy guarantee.

\begin{definition}
The \emph{cost} of a $\ell$-diverse solutions is the number of stars introduced, among the attributes $q \in Q$, to the database.
\end{definition}

The fact that optimal $3$-diversity with binary attributes and one sensitive ternary attribute is {\sf NP}-hard should not be too surprising, in light of our proofs of hardness for $3$-anonymity.  Intuitively, the extra sensitive attribute constraint should make $3$-diversity only harder than $3$-anonymity.  What is perhaps surprising is that optimal $2$-diversity is {\sf NP}-hard for databases with three sensitive attributes per row, in light of our result that optimal $2$-anonymity is in $P$.

\begin{reminder}{\ref{2div}}
Optimal $2$-diversity with binary attributes and three sensitive attributes is {\sf NP}-hard.
\end{reminder}

\begin{proof}
The reduction is from edge partition into triangles which is known to be NP-Complete even when the graph is tripartite ~\cite{garey1979cai}.  The idea for the reduction is similar to the reductions in \cite{aggarwal2005at} for binary $k$-anonymity (see Appendix~\ref{AggarwalReductionLemma}). Given a graph $G=(V,E)$, define a $2$-diversity instance as follows:  the rows of the table correspond to each $e \in E$, while the columns correspond to the $n = |V|$ vertices of $G$ plus the sensitive attributes $s_{i_0},s_{i_1},s_{i_2}$.  Given an arbitrary ordering of the vertices $V = \{v_1,...,v_n\}$ and edges $E = \{e_1,...,e_m\}$ define a matrix $R^G$ as follows:

\[ \mbox{$R^G[i][j]=$} \left\{ \begin{array}{ll}
         1 & \mbox{if $v_j \in e_i$};\\
         0 & otherwise.\end{array} \right. \]

Let $V_0,V_1,V_2$ be the tripartition of vertices in G.  Now label:

\[ \mbox{$s_{i_j}=$} \left\{ \begin{array}{ll}
         0 & \mbox{if $e_i \bigcap V_j = \emptyset$};\\
			1 & \mbox{otherwise};
\end{array}          \right. \]

The cost of grouping any three rows in a $2$-diverse solution is at least three stars because the graph is simple. Furthermore, any group of more than three rows will require more than three stars per row to $2$-diversify.  The proof is argument is identical to lemma \ref{AggarwalReductionLemma} in Appendix~\ref{edgePartIntoTrianglesAnd4StarsToBinary3Anonymity}. \newline \newline

\begin{lemma}
Any group of only two rows in $R^G$ violates the $2$-diversity constraint.
\end{lemma}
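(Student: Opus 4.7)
The plan is to exploit the tripartite structure of $G$, which forces each row's sensitive-attribute triple into a very restricted form, and then apply a pigeonhole-type observation.

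First I would observe that because $G$ is tripartite with parts $V_0, V_1, V_2$, any edge $e_i \in E$ has its two endpoints in two \emph{distinct} parts $V_a$ and $V_b$, with the third part $V_c$ disjoint from $e_i$. By the definition of $s_{i_j}$, this means the sensitive attribute vector $(s_{i_0}, s_{i_1}, s_{i_2})$ has $s_{i_a} = s_{i_b} = 1$ and $s_{i_c} = 0$. Therefore the sensitive triple of every row lies in the three-element set $\{(0,1,1), (1,0,1), (1,1,0)\}$, i.e. it has exactly two $1$'s and one $0$.

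Next I would recall that $2$-diversity requires, for every sensitive attribute, that the two rows in the group take \emph{distinct} values in that column. Since the sensitive attributes are binary, this means each of the three sensitive columns must contain one $0$ and one $1$ across the group. Under the convention (implicit in the cost definition, which only counts stars in $Q$) that sensitive attributes are not suppressed, this places a hard combinatorial constraint on the pair.

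The heart of the argument is then a short pigeonhole step: any two vectors drawn from $\{(0,1,1),(1,0,1),(1,1,0)\}$ must agree in at least one coordinate. Indeed, each such vector has exactly one zero among three coordinates, so two vectors either share their zero-coordinate (and hence agree there) or have zeros in different coordinates, in which case they agree on the third coordinate where both take the value $1$. Either way, at least one sensitive column has a repeated value across the two rows, violating constraint~2 of $\ell$-diversity.

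I do not anticipate a serious obstacle here: the only subtle point is justifying that we cannot ``rescue'' the pair by starring sensitive entries, which follows from the diversity model adopted in this section (the cost counts only stars among quasi-identifiers, and constraint~2 demands genuine distinctness of sensitive values). Once that convention is stated, the proof is essentially the two observations above combined.
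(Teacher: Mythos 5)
Your proof is correct and follows essentially the same route as the paper's: both arguments reduce to the observation that tripartiteness forces every row's sensitive triple into $\{(0,1,1),(1,0,1),(1,1,0)\}$, so any two rows must agree in at least one sensitive column, violating constraint~2. Your write-up is simply a more explicit version of the paper's one-line argument, with the pigeonhole step and the no-starring-of-sensitive-attributes convention spelled out.
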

\begin{proof}
Let $i,j$ be any pair of distinct rows. Because the graph is tripartite, either $s_{i_1} = s_{j_1}$ or $s_{i_2} = s_{j_2}$ or else $s_{i_3} = s_{j_3}$. The diversity constraints for two rows will look like:

\begin{center}
\begin{tabular}{|c|c|c|c|c|}
\hline
 & $Q$ & \multicolumn{3}{c|}{$S$} \\
	\hline
$e_i$ & $\ldots$ &  $1$& $1$& $0$    \\
$e_j$ & $\ldots$ & $1$ &$0$& $1$    \\
	\hline
\end{tabular}\end{center} \QED \end{proof}

Similarly, the diversity constraints coupled with the fact that the graph is 3-Partite also prevent us from choosing three rows corresponding to a $4$-star in G because the rows would share a sensitive attribute.  However, the diversity constraints do allow for the possibility that the three rows correspond to a triangle in G as illustrated in the table:

\begin{center}
\begin{tabular}{|c|c|c|c|c|}
\hline
 & $Q$ & \multicolumn{3}{c|}{$S$} \\
	\hline
$e_i$ & & $1$ &$1$ & $0$    \\
$e_j$ &  &$1$ & $0$ &$1$    \\
$e_k$ &  & $0$& $1$ & $1$    \\
	\hline
\end{tabular}\end{center}

Thus the edges of G can be partitioned into triangles iff the $2$-diversity instance has a solution that introduces exactly 3 stars per row. \QED
\end{proof}

\begin{reminder}{\ref{3div}}
Optimal $3$-diversity with binary attributes is {\sf NP} hard, with only one sensitive ternary attribute.
\end{reminder}

\begin{proof}
The hardness reduction for $3$-diversity with one sensitive attribute is essentially the same as above. Assume that $G$ is tripartite, and let $V_1,V_2,V_3$ be the three partite sets in $G$. Let $s_i$ denote the sensitive attribute for row $v_i$.  If $3$-diversity is to be feasible then the sensitive attribute $s_i$ must be allowed to take at least three values.  Other attributes must be binary.

\[ \mbox{$s_{i}=$} \left\{ \begin{array}{ll}
         1 & \mbox{if $e_i = (x,y),$ with $x \in V_1, y \in V_2$};\\
			2 & \mbox{if $e_i = (x,z),$ with $x \in V_1, z \in V_3$};\\
			3 & \mbox{if $e_i = (y,z),$ with $y \in V_2, z \in V_3 $};
\end{array}          \right. \]

As before, the diversity constraints now prevent us from grouping three rows which correspond to a $4$-star.  Groups of rows which do not correspond to a triangle in G still require more than three stars per row.  Thus the edges of G can be partitioned into triangles iff the $3$-diversity instance has a solution that introduces exactly 3 stars per row. \QED
\end{proof}

\section{Conclusion}

We have demonstrated the hardness and feasibility of several methods used in database privacy, settling several open problems on the topic. The upshot is that most of these problems are difficult to solve optimally, even in very special cases; however in some interesting cases these problems can be solved faster. Several interesting open questions address possible ways around this intractability:

\begin{itemize}

\item To what degree can the hard problems be approximately solved? For example, the best known approximation algorithm for $k$-anonymity, given by Park and Shim~\cite{park2007aak}, suppresses no more than $O(\log k)$ times the optimal number of entries. Could better approximation ratios be achieved when the number of attributes is small?


\item The best known running time for Simplex Matching is $O(n^3+n^2m^2)$ steps ~\cite{anshelevich2007tbm}.  Here, $n$ is the number of nodes and $m$ is the number of hyperedges in the hypergraph.  In our algorithm for $2$-anonymity, $n$ is also the number of rows in the database while $m = \binom{n}{3} = O(n^3)$ because we add a hyperedge for every triples. Hence our algorithm for $2$-Anonymity has running time $O(n^8)$. Can this exponent be reduced to a more practical running time?

\end{itemize}

\acks{We would like to thank Manuel Blum, Lenore Blum and Anupam Gupta for their help and guidance during this work.}

\vskip 0.2in
\bibliography{kAnonymityPaper}
\appendix
\section{Proof of Lemma \ref{kAnonConstAttribReductionLemma}}
\label{kAnonConstAttribReductionLemmaProof}

Recall we had the following three properties of the database $D$ in the reduction of Theorem~\ref{3anon27}: 

\begin{enumerate}
\item For any $x$ rows  $r_i,r_j,r_k,r_l$, where $x \geq 4$, the cost of anonymizing these rows is
$$C_{i,j,k,l} = x \cdot 27$$ because there is no alphabet symbol that is used in all $4$ rows.
\item If \mbox{$\{r_i,r_j,r_k\} \notin M$} then the cost of anonymizing the three corresponding rows is
$$C_{i,j,k} = 3\cdot 27$$ because there is no alphabet symbol that is used in all $3$ rows.
\item If \mbox{$\{r_i,r_j,r_k\} \in M$} then the cost of anonymizing the three corresponding rows is
$$C_{i,j,k} = 3\cdot 26$$
because the three rows will match in exactly one of the 27 columns.
\end{enumerate}

\begin{reminderlemma}{\ref{kAnonConstAttribReductionLemma}}
There is a polynomial time mapping $g$ from 3DM-3 feasible solutions to 3-anonymity feasible solutions, such that if $M' \subseteq M$ is a 3DM-3 feasible solution then $C_{3DM}(M') = 27 C_{3ANON}(g(M'))$.
\end{reminderlemma}

\begin{proof} We use the reduction defined in the proof of Theorem~\ref{3anon27}.
Recall that in a $3$-anonymity solution $P$ is a partition of the rows into groups of size $3, 4$ and $5$.  By the three properties of $D$, any group which does not correspond to triple from $M$ must be suppressed entirely.  Hence, we can think of the solution as a partition of the rows into triples $(x_i,y_j,w_k)$ from $M$ and some other rows.  Similarly, we can think of a 3DM solution as a partition of the elements into triples from $M$ and some other elements.  Thus we can define a polynomial time computable transformation $f$ between 3DM-3 solutions and 3-anonymity solutions. 

By the above properties of $D$, $Cost^{\star}_3(g(M')) = 27n - 3|M'|$.  Therefore,
\begin{eqnarray*}
C_{3DM}(M') &=& \frac{3|M'|}{|X|+|Y|+|W|} \\
&=& \frac{3|M'|}{ n} \\
&=& 27-\frac{27n -3|M'|}{ n} \\
&=& 27-\frac{Cost^{\star}_3(g(M')}{ n} \\
&=& 27\cdot C_{3ANON}(g(M'))
\end{eqnarray*}
\QED
\end{proof}
\section{Edge Partition into 4-Star Reduction - Examples}

\label{edgePartInto4StarsExamples}

Figure~\ref{reductionGadgets} shows examples of a variable gadget and a clause gadget.

\begin{figure}[h]
\begin{center}
\subfigure[Example gadget: $G_3$.  The private edges are solid and the shared edges are dashed.  The selection of the three deleted leaves is arbitrary.]{\label{variableGadget}
\begin{picture}(78,94)(0,-94)
\put(0,-94){\framebox(78,94){}}
\gasset{Nw=4.0,Nh=4.0,Nmr=2.0,Nfill=y,fillcolor=Black}
\gasset{AHnb=0}
\node(n73)(2.82,-31.88){}

\node(n74)(4.0,-60.0){}

\node(n75)(12.0,-16.0){}

\node(n76)(12.0,-76.0){}

\node(n77)(31.97,-16.0){}

\node(n78)(63.82,-16.0){}

\node(n79)(31.97,-4.0){}

\node(n81)(32.03,-76.0){}

\node(n82)(64.0,-76.0){}

\node(n83)(31.97,-88.0){}

\node(n84)(20.0,-60.0){}

\node(n85)(28.0,-60.0){}

\node(n86)(44.0,-60.0){}

\node(n87)(59.85,-60.0){}

\node(n88)(67.85,-60.0){}

\node(n89)(59.88,-32.0){}

\node(n90)(67.88,-32.0){}

\node(n91)(44.0,-32.0){}

\node(n92)(28.0,-32.0){}

\node(n94)(8.0,-52.0){}

\node(n95)(8.0,-40.0){}

\node(n96)(15.97,-52.0){}

\node(n97)(20.0,-32.0){}

\node(n98)(15.97,-40.0){}

\drawedge(n79,n77){}

\drawedge(n78,n79){}

\drawedge(n79,n75){}

\drawedge(n75,n73){}

\drawedge[dash={2.0 2.0 2.0 3.0}{0.0}](n73,n95){ }

\drawedge(n73,n74){}

\drawedge[dash={2.0 2.0 2.0 3.0}{0.0}](n74,n94){ }

\drawedge(n74,n76){}

\drawedge(n76,n83){}

\drawedge(n83,n81){}

\drawedge(n83,n82){}

\drawedge(n85,n81){}

\drawedge(n97,n84){}

\drawedge[dash={2.0 2.0 2.0 3.0}{0.0}](n84,n96){ }

\drawedge[dash={2.0 2.0 2.0 3.0}{0.0}](n98,n97){ }

\drawedge(n92,n77){}

\drawedge(n75,n97){}

\drawedge(n77,n91){}

\drawedge(n76,n84){}

\drawedge(n86,n81){}

\drawedge(n87,n82){}

\drawedge(n88,n82){}

\drawedge(n78,n89){}

\drawedge(n90,n78){}

\drawedge(n90,n88){}

\node(n100)(71.88,-52.0){}

\node(n101)(71.88,-40.0){}

\node(n102)(63.88,-40.0){}

\node(n103)(55.88,-40.0){}

\node(n104)(55.88,-52.0){}

\node(n105)(63.88,-52.0){}

\node(n106)(24.03,-52.0){}

\drawedge[dash={2.0 2.0 2.0 3.0}{0.0}](n100,n88){ }

\drawedge[dash={2.0 2.0 2.0 3.0}{0.0}](n101,n90){ }

\drawedge[dash={2.0 2.0 2.0 3.0}{0.0}](n102,n89){ }

\drawedge[dash={2.0 2.0 2.0 3.0}{0.0}](n89,n103){ }

\drawedge(n104,n87){}

\drawedge[dash={2.0 2.0 2.0 3.0}{0.0}](n87,n105){ }

\node(n114)(32.03,-52.0){}

\node(n115)(48.0,-52.0){}

\node(n116)(40.0,-52.0){}

\node(n117)(24.03,-40.1){}

\node(n118)(32.03,-40.1){}

\node(n119)(48.0,-40.1){}

\node(n120)(40.0,-40.0){}

\drawedge[dash={2.0 2.0 2.0 3.0}{0.0}](n86,n115){ }

\drawedge[dash={2.0 2.0 2.0 3.0}{0.0}](n116,n86){ }

\drawedge[dash={2.0 2.0 2.0 3.0}{0.0}](n119,n91){ }

\drawedge[dash={2.0 2.0 2.0 3.0}{0.0}](n91,n120){ }

\drawedge[dash={2.0 2.0 2.0 3.0}{0.0}](n118,n92){ }

\drawedge[dash={2.0 2.0 2.0 3.0}{0.0}](n92,n117){ }

\drawedge[dash={2.0 2.0 2.0 3.0}{0.0}](n106,n85){ }

\drawedge[dash={2.0 2.0 2.0 3.0}{0.0}](n114,n85){ }

\end{picture}
}
\subfigure[$S_5$: 4-Star Clause Gadget]{\label{4StarClauseGadget}
\begin{picture}(44,24)(0,-24)
\put(0,-24){\framebox(38,24){}}
\gasset{Nw=4.0,Nh=4.0,Nmr=2.0,Nfill=y,fillcolor=Black,AHnb=0}
\node[NLangle=352.97,NLdist=41.05,Nadjustdist=2.9](n66)(20.0,-4.0){ }

\node(n67)(8.0,-4.0){}

\node(n68)(8.0,-16.0){}

\node(n69)(20.0,-16.0){}

\node(n70)(32.0,-16.0){}

\drawedge[dash={2.0 2.0 2.0 3.0}{0.0}](n66,n68){ }

\drawedge[dash={2.0 2.0 2.0 3.0}{0.0}](n66,n69){ }

\drawedge[dash={2.0 2.0 2.0 3.0}{0.0}](n66,n70){ }

\gasset{dash={2.0 2.0 2.0 3.0}{0.0}}
\drawedge[dash={}{0.0}](n67,n66){ }

\end{picture}
}
\caption{Gadgets}
\label{reductionGadgets}

\end{center}

\end{figure}

\paragraph{Example 1.} $ \phi = ({\bf \bar{x}} , {\bf y} , {\bf z}) ({\bf x} , {\bf \bar{ y}} , {\bf z})$. Note that the {\sc 1-in-3 Sat} formula has two satisfying assignments: $(x=t, y = t, z = f)$, $(x=f,y=f,z=f)$. Similarly, the corresponding graph $G_\phi$ (shown in figure~\ref{4StarReduction1}) can be partitioned into 4-stars in exactly two ways, both corresponding to the satisfying assignments.

\begin{figure}[h]
\begin{center}
\begin{picture}(133,97)(0,-97)
\put(0,-97){\framebox(133,97){}}
\gasset{Nw=3.0,Nh=3.0,Nmr=1.5,Nfill=y,fillcolor=Black,AHnb=0}
\node[ExtNL=y,NLangle=0.0,NLdist=1.4](n0)(27.91,-31.87){X}

\node(n1)(19.91,-39.87){}

\node(n2)(27.91,-39.87){}

\node(n3)(35.91,-39.87){}

\node(n4)(19.91,-59.87){}

\node(n5)(27.91,-59.87){}

\node(n6)(35.91,-59.87){}

\node[ExtNL=y,NLangle=0.0,NLdist=1.6](n7)(27.91,-67.87){$\bar{X}$}

\node[ExtNL=y,NLangle=0.0,NLdist=1.5](n8)(108.02,-23.9){Z}

\node(n9)(92.05,-31.87){}

\node(n10)(107.99,-31.9){}

\node(n11)(124.02,-31.9){}

\node(n12)(96.02,-39.9){}

\node(n13)(88.02,-39.9){}

\node(n14)(104.02,-39.9){}

\node(n15)(111.99,-39.9){}

\node(n16)(119.96,-39.9){}

\node(n17)(128.02,-39.9){}

\drawedge(n8,n9){}

\drawedge(n8,n10){}

\drawedge(n8,n11){}

\drawedge(n16,n11){}

\drawedge(n17,n11){}

\drawedge(n10,n15){}

\drawedge(n10,n14){}

\drawedge(n9,n12){}

\drawedge(n9,n13){}

\drawedge(n7,n6){}

\drawedge(n5,n7){}

\drawedge(n4,n7){}

\drawedge(n1,n4){}

\drawedge(n2,n5){}

\drawedge(n3,n6){}

\drawedge(n0,n2){}

\drawedge(n3,n0){}

\drawedge(n1,n0){}

\node[ExtNL=y,NLangle=0.0,NLdist=1.5](n31)(108.02,-79.84){$\bar{Z}$}

\node(n32)(92.05,-71.82){}

\node(n33)(107.99,-71.85){}

\node(n34)(124.02,-71.85){}

\node(n35)(96.02,-63.84){}

\node(n36)(88.02,-63.98){}

\node(n37)(103.99,-63.98){}

\node(n38)(111.99,-63.98){}

\node(n39)(119.96,-63.98){}

\node(n40)(128.02,-63.98){}

\drawedge(n31,n32){}

\drawedge(n31,n33){}

\drawedge(n31,n34){}

\drawedge(n39,n34){}

\drawedge(n40,n34){}

\drawedge(n33,n38){}

\drawedge(n33,n37){}

\drawedge(n32,n35){}

\drawedge(n32,n36){}

\drawedge(n34,n31){}

\drawedge(n13,n36){}

\drawedge(n37,n14){}

\drawedge(n12,n35){}

\node(n51)(92.02,-55.93){}

\node(n52)(92.05,-47.9){}

\node(n53)(116.21,-51.8){}

\node(n54)(124.02,-51.93){}

\drawedge(n13,n52){}

\drawedge(n52,n14){}

\drawedge(n15,n52){}

\drawedge(n54,n38){}

\drawedge(n54,n39){}

\drawedge(n54,n40){}

\drawedge(n40,n53){}

\drawedge(n39,n53){}

\drawedge(n38,n53){}

\drawedge(n51,n37){}

\drawedge(n51,n35){}

\drawedge(n51,n36){}

\node[ExtNL=y,NLangle=0.0,NLdist=1.4](n79)(67.86,-31.87){Y}

\node(n80)(59.86,-39.87){}

\node(n81)(67.86,-39.87){}

\node(n82)(75.86,-39.87){}

\node(n83)(59.86,-59.87){}

\node(n84)(67.97,-60.0){}

\node(n85)(75.86,-59.87){}

\node[ExtNL=y,NLangle=0.0,NLdist=1.6](n86)(67.86,-67.87){$\bar{Y}$}

\drawedge(n86,n85){}

\drawedge(n84,n86){}

\drawedge(n83,n86){}

\drawedge(n80,n83){}

\drawedge(n81,n84){}

\drawedge(n82,n85){}

\drawedge(n79,n81){}

\drawedge(n82,n79){}

\drawedge(n80,n79){}

\node[ExtNL=y,NLangle=0.0,NLdist=1.5](n95)(56.06,-83.94){Clause 1 ($\bar{X},Y,Z$)}

\node(n96)(56.06,-87.94){}

\node(n97)(56.06,-91.94){}

\node(n98)(44.06,-83.94){}

\node(n99)(44.06,-87.94){}

\node(n100)(44.06,-91.94){}

\drawedge(n100,n97){}

\drawedge(n99,n96){}

\drawedge(n98,n95){}

\node[ExtNL=y,NLangle=358.53,NLdist=8.82](n101)(56.0,-4.0){Clause 2 (X,$\bar{Y}$,Z)}

\node(n102)(55.94,-8.0){}

\node(n103)(56.07,-12.28){}

\node(n104)(44.07,-4.29){}

\node(n105)(44.0,-8.0){}

\node(n106)(44.07,-12.28){}

\drawedge(n106,n103){}

\drawedge(n105,n102){}

\drawedge(n104,n101){}

\drawedge(n95,n4){}

\drawedge(n96,n5){}

\drawedge(n97,n6){}

\drawedge(n95,n12){}

\drawedge(n96,n15){}

\drawedge(n97,n16){}

\drawedge(n103,n16){}

\drawedge(n102,n17){}

\drawedge(n17,n101){}

\drawedge(n95,n83){}

\drawedge(n96,n84){}

\drawedge(n97,n85){}

\drawedge(n1,n103){}

\drawedge(n2,n102){}

\drawedge(n3,n101){}

\drawedge(n80,n103){}

\drawedge(n82,n101){}

\drawedge(n102,n81){}
\end{picture}
\caption{\small  $\phi = ({\bf \bar{x}} , {\bf y} , {\bf z}) ({\bf x} , {\bf \bar{ y}} , {\bf z})$}
\label{4StarReduction1}
\end{center}
\end{figure}
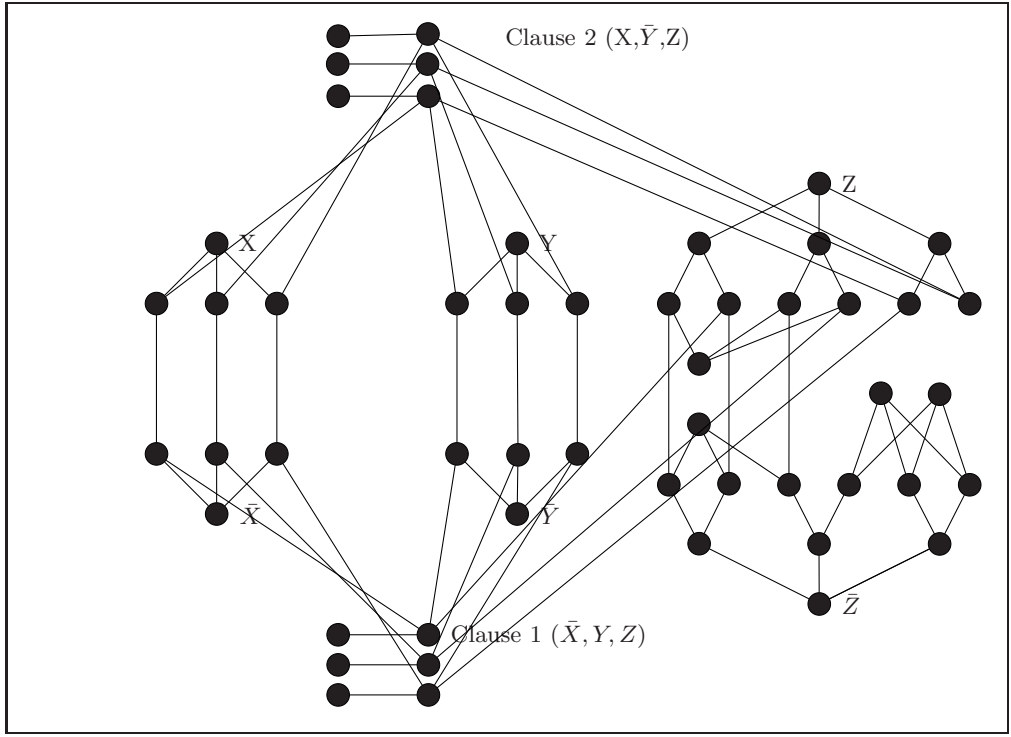

\paragraph{Example 2.} $\phi = ({\bf x} , {\bf y} , {\bf z})  ({\bf \bar{x}} , {\bf \bar{ y}} , {\bf \bar{z}})$. Note that $\phi$ is {\em not} 1-in-3 satisfiable. Similarly, the corresponding graph $G_\phi$ (shown in figure~\ref{4StarReduction2}) cannot be edge-partitioned into 4-stars.

\begin{figure}[h]

\begin{center}
\begin{picture}(109,101)(0,-101)
\put(0,-101){\framebox(109,101){}}
\gasset{Nw=3.0,Nh=3.0,Nmr=1.5,Nfill=y,fillcolor=Black,AHnb=0}
\node[ExtNL=y,NLangle=0.0,NLdist=1.1](n1)(20.0,-28.0){X}

\node(n2)(12.0,-36.0){}

\node(n3)(19.97,-35.84){}

\node(n4)(27.97,-35.84){}

\node(n5)(11.97,-51.84){}

\node(n6)(19.97,-51.84){}

\node(n7)(28.0,-52.0){}

\node[NLangle=0.0,NLdist=7.0](n8)(19.97,-63.84){$\bar{X}$}

\drawedge(n1,n3){}

\drawedge(n4,n1){}

\drawedge(n1,n2){}

\drawedge(n5,n8){}

\drawedge(n6,n8){}

\drawedge(n7,n8){}

\drawedge(n4,n7){}

\drawedge(n3,n6){}

\drawedge(n2,n5){}

\node[ExtNL=y,NLangle=0.0,NLdist=1.1](n9)(56.22,-27.84){Y}

\node(n10)(48.22,-35.84){}

\node(n11)(56.0,-36.0){}

\node(n12)(64.22,-35.84){}

\node(n13)(48.22,-51.84){}

\node(n14)(56.22,-51.84){}

\node(n15)(64.22,-51.84){}

\node[ExtNL=y,NLangle=0.0,NLdist=1.0](n16)(56.22,-63.84){$\bar{ Y}$}

\drawedge(n9,n11){}

\drawedge(n12,n9){}

\drawedge(n9,n10){}

\drawedge(n13,n16){}

\drawedge(n14,n16){}

\drawedge(n15,n16){}

\drawedge(n12,n15){}

\drawedge(n11,n14){}

\drawedge(n10,n13){}

\node[ExtNL=y,NLangle=0.0,NLdist=1.0](n25)(95.97,-28.0){Z}

\node(n26)(87.91,-36.1){}

\node(n27)(95.91,-36.1){}

\node(n28)(103.91,-36.1){}

\node(n29)(87.91,-52.1){}

\node(n30)(95.91,-52.1){}

\node(n31)(104.0,-52.0){}

\node[ExtNL=y,NLangle=0.0,NLdist=1.0](n32)(95.91,-64.1){$\bar{ Z}$}

\drawedge(n25,n27){}

\drawedge(n28,n25){}

\drawedge(n25,n26){}

\drawedge(n29,n32){}

\drawedge(n30,n32){}

\drawedge(n31,n32){}

\drawedge(n28,n31){}

\drawedge(n27,n30){}

\drawedge(n26,n29){}

\node(n41)(46.0,-4.0){}

\node(n42)(45.94,-8.0){}

\node[ExtNL=y,NLangle=0.0,NLdist=1.0](n43)(45.97,-12.0){Clause 1 (X,Y,Z)}

\node(n44)(30.06,-4.0){}

\node(n45)(30.09,-8.0){}

\node(n46)(30.06,-12.0){}

\drawedge(n44,n41){}

\drawedge(n45,n42){}

\drawedge(n43,n46){}

\drawedge(n2,n43){}

\drawedge(n3,n42){}

\drawedge(n4,n41){}

\drawedge(n10,n43){}

\drawedge(n11,n42){}

\drawedge(n12,n41){}

\drawedge(n28,n41){}

\drawedge(n42,n27){}

\drawedge(n43,n26){}

\node(n47)(45.98,-88.0){}

\node(n48)(45.94,-91.87){}

\node[ExtNL=y,NLangle=0.0,NLdist=1.0](n49)(46.0,-96.12){Clause 2 ($\bar{X},\bar{Y},\bar{Z}$)}

\node(n50)(30.32,-87.87){}

\node(n51)(30.09,-91.87){}

\node(n52)(30.06,-95.86){}

\drawedge(n50,n47){}

\drawedge(n51,n48){}

\drawedge(n49,n52){}

\drawedge(n5,n49){}

\drawedge(n13,n49){}

\drawedge(n29,n49){}

\drawedge(n48,n6){}

\drawedge(n48,n14){}

\drawedge(n48,n30){}

\drawedge(n47,n7){}

\drawedge(n47,n15){}

\drawedge(n47,n31){}

\end{picture}
\caption{\small $\phi = ({\bf x} , {\bf y} , {\bf z})  ({\bf \bar{x}} , {\bf \bar{ y}} , {\bf \bar{z}})$}
\label{4StarReduction2}
\end{center}
\end{figure}

\section{Reducing Edge Partition Into Triangles And 4-Stars to Binary 3-Anonymity}
\label{edgePartIntoTrianglesAnd4StarsToBinary3Anonymity}
Given a graph $G = (V,E)$ with $m$ edges and $n$ vertices build the following table: the rows of the table correspond to each edge $e \in E$, while the columns correspond to the $n = |V|$ vertices of $G$.  Given an arbitrary ordering of the vertices $V = \{v_1,...,v_n\}$ and edges $E = \{e_1,...e_m\}$ define a database $R^G$ as follows:

\[ \mbox{$R^G[i][j]=$} \left\{ \begin{array}{ll}
         1 & \mbox{if $v_j \in e_i$};\\
         0 & otherwise.\end{array} \right. \]

Clearly this reduction takes polynomial time. Note that, because the graph $G$ is simple, any $3$-anonymous solution must include at least three stars per row. This follows because for any set of three edges, there are at least three vertices that are incident with one, but not all, of the three edges.  Furthermore, note that if a set of three edges do not form a triangle or $4$-star, then there are at least four vertices that are incident with one (but not all) of the three edges.  The result follows from lemma \ref{AggarwalReductionLemma}.

\begin{lemma}
\label{AggarwalReductionLemma}
Let $m$ be the number of edges in $G$, the cost of the optimal $3$-anonymous solution for $R^G$ is $3m$ stars iff the graph $G$ can be edge partitioned into $4$-stars and triangles.
\end{lemma}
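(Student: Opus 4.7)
The plan is to compute, for any grouping of rows of $R^G$ used by a $3$-anonymous solution, the exact suppression cost per row in terms of the underlying edge structure, and then argue that $3$ stars per row is the minimum and is achievable precisely by triangles and $4$-stars.

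First I would record the following cost formula. Because the only way for a group of rows to become identical under star-suppression is to replace every entry in every \emph{mixed} column (a column in which the group's entries are not all equal) by a star, a group of $t$ rows corresponding to edges $F \subseteq E$ costs exactly
\[
t \cdot \bigl|\{\, v \in V : 1 \le |\{e \in F : v \in e\}| < t \,\}\bigr|
\]
stars. In other words, the cost per row in the group equals the number of vertices that are incident to at least one, but not all, of the $|F|=t$ edges of the group. I would then do a small case analysis on $t$.

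For $t=3$: since each row has exactly two $1$'s, three edges span at least $3$ vertices, and each vertex incident to \emph{some} but not \emph{all} of the three edges is mixed. A direct check of the possibilities shows the number of mixed vertices equals $3$ exactly when $F$ is a triangle (three vertices, each in $2$ of the $3$ edges, none in all $3$) or a $4$-star (three leaves each in exactly one edge; the center is in all three and therefore \emph{not} mixed), and is at least $4$ for every other configuration of three edges. Hence a group of three edges costs $\ge 9$ stars, with equality iff $F$ is a triangle or a $4$-star.

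For $t \ge 4$: I would show the cost per row is at least $t \ge 4 > 3$. The key observation is that with $t$ edges, the number of mixed vertices is at least $t$: any vertex incident to all $t$ edges would have degree $\ge t$ in $F$, and since the graph is simple at most one such vertex can exist; so at least $2t - 1 - 1 = 2t - 2 \ge t$ of the endpoint-occurrences (counted with multiplicity) correspond to vertices \emph{not} in all $t$ edges, and an easy counting argument converts this into at least $t$ distinct mixed vertices. (This is the step I expect to require the most care, but it is essentially bookkeeping on the incidence multiset.)

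With these bounds, any $3$-anonymous partition of the rows of $R^G$ has total cost at least $3m$, and equality holds if and only if every part of the partition is a group of three rows corresponding either to a triangle or to a $4$-star of $G$. That is exactly an edge partition of $G$ into triangles and $4$-stars, giving the lemma. \QED
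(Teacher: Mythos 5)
Your overall strategy---reduce the per-row cost of a group to the number of \emph{mixed} columns, show that this number is exactly $3$ precisely for triangles and $4$-stars, and at least $4$ for every other group---is the same as the paper's. In fact you are more explicit than the paper about two points it leaves implicit: the exact cost formula $t\cdot(\text{number of mixed columns})$, and the need to rule out groups of $t\ge 4$ rows achieving only $3$ stars per row. Your $t=3$ case analysis and the way you assemble the two directions of the equivalence are correct.

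The one genuine flaw is the specific counting claim in your $t\ge 4$ step: it is \emph{not} true that a group of $t$ edges always has at least $t$ mixed vertices. Take $F$ to be the six edges of a $K_4$: no vertex lies in all six edges, so all four vertices are mixed, but $4 < 6$. Your bookkeeping also cannot deliver $t$ in general: in the no-common-vertex case the $2t$ endpoint occurrences are spread over mixed vertices whose degree in $F$ may be as large as $t-1$, which only forces about $2t/(t-1)\approx 2$ distinct mixed vertices (and the ``$2t-1-1$'' count is off anyway, since a vertex in all $t$ edges absorbs $t$ occurrences, not $2$). Fortunately you do not need $t$ mixed vertices; you only need strictly more than $3$. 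Argue in two cases. If some vertex $v$ lies in all $t$ edges, then $F$ is a star centered at $v$ and, since $G$ is simple, its $t\ge 4$ leaves are distinct and each lies in exactly one edge, hence is mixed. If no vertex lies in all $t$ edges, then every vertex touched by $F$ is mixed, and $F$ spans at least $4$ vertices because $3$ vertices carry at most $\binom{3}{2}=3<t$ edges of a simple graph. Either way there are at least $4$ mixed columns, so at least $4$ stars per row, which is exactly what the lemma requires.
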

\begin{proof}
First, suppose that the cost of the optimal {\sc $3$-Anonymity} solution is $3m$. Since each row has at least 3 stars in it, each row must have exactly three stars because there are $m$ rows in $R^G$.  Given a set of three identical (anonymized) rows, each row has 3 stars each corresponding to a vertex that was incident to one, but not all of the three edges represented by those rows.  Hence, those three edges form either a 4-star or a triangle.  Therefore, the edges of the graph can be partitioned into triangles and 4-stars.

For the other direction, suppose that $G$ can be partitioned into triangles and 4-stars.  Group the rows of the 3-anonymity instance according to the partition. Now consider a group of three rows in the table that correspond to the edges of a 4-star. The three rows have the form:
\begin{center}
\begin{tabular}{|c|c|}
	\hline
$(v_0,v_1)$ &  $\cdots 1100 \cdots$ \\
$(v_0,v_2)$ &  $\cdots 1010 \cdots$  \\
$(v_0,v_3)$ &  $\cdots 1001 \cdots$   \\
	\hline
\end{tabular}
\end{center}

where the $\cdots$ are all $0$'s. For a triangle, the three rows corresponding to its edges looks like:
\begin{center}
\begin{tabular}{|c|c|}
	\hline
$(v_0,v_1)$ &  $\cdots 110 \cdots$    \\
$(v_0,v_2)$ &  $\cdots 101 \cdots$    \\
$(v_1,v_2)$ &  $\cdots 011 \cdots$    \\
	\hline
\end{tabular}\end{center}

where again the $\cdots$ are all $0$'s. Clearly, both groups of rows can be made identical by suppressing only three entries per row.  Hence, the table can be made 3-anonymous with $3m$ stars. \QED
\end{proof}

\section{Edge Partitioning $G_d$ into 4-Stars}
\label{proofOfGadgetClaim}
Recall the statement of lemma \ref{truefalse} 
\begin{lemma} Let $G$ be a graph containing a share-respecting copy of $G_d$. Assuming there is an edge partition of $G$ into $4$-stars, exactly one of two cases must hold for that partition:  
\begin{enumerate}
\item All shared edges belonging to the top tree of $G_d$ are contained in $4$-stars with centers in $G_d$, while all shared edges belonging to the bottom tree are contained in 4-stars with centers not contained in $G_d$.
\item All shared edges belonging to the bottom tree of $G_d$ are contained in 4-stars with centers in $G_d$, while all shared edges belonging to the bottom tree are contained in 4-stars with centers not contained in $G_d$.
\end{enumerate}
\end{lemma}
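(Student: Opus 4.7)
The plan is to recast the lemma as a graph 2-coloring problem on the internal (non-leaf) vertices of $G_d$. The key structural fact is that every internal vertex of $G_d$ has degree exactly three in $G_d$: the root has three children, every other interior node has two children and one parent, and every leaf-parent has three neighbors (two leaf-children plus a parent, or one leaf-child, a parent, and a joining edge to the opposite tree in the case where one of its leaf-children was deleted). By the share-respecting hypothesis, external edges of $G$ attach only at shared edges, so every internal vertex of $G_d$ still has degree three in the ambient graph $G$.

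Given any partition of $E(G)$ into $4$-stars, define $f(v) = 1$ if $v$ is the center of some $4$-star in the partition and $f(v) = 0$ otherwise. Since a $4$-star has exactly three edges, all incident to its center, a degree-three vertex $v$ can be the center of at most one $4$-star; moreover, if $f(v) = 1$ then all three edges at $v$ belong to that $4$-star, while if $f(v) = 0$ then each edge at $v$ belongs to a $4$-star centered at the other endpoint. Applying this to adjacent degree-three vertices $u, v$ and the edge $(u,v)$ forces $\{f(u), f(v)\} = \{0, 1\}$. Hence the restriction of $f$ to the internal vertices of $G_d$ is a proper $2$-coloring of the induced subgraph $H$ on those vertices.

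Next I would verify that $H$ is connected and bipartite, with every top-tree leaf-parent on one side of the bipartition and every bottom-tree leaf-parent on the other. Connectedness is immediate. The depth-parity $2$-coloring within each tree is proper, and all leaf-parents of a single tree share depth $d-1$ and hence a color within that tree. The joining edges connect top leaf-parents to bottom leaf-parents, so any cycle of $H$ traverses an even number (at most two) of joining edges together with even-length up-and-down tree paths between the matched endpoints, and therefore has even length. Combined with connectedness, this gives bipartiteness with exactly two proper $2$-colorings, related by a global color flip, which swap the top and bottom leaf-parents between the two color classes.

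Finally I would translate each $2$-coloring back into the two cases of the lemma. When the top leaf-parents carry $f = 1$, each top shared edge $(u, l)$ lies in the $4$-star centered at its leaf-parent $u$, which is interior to $G_d$; simultaneously every bottom leaf-parent $u$ has $f(u) = 0$, so each bottom shared edge $(u, l)$ sits in a $4$-star centered at the boundary leaf $l$, which must use two external edges at $l$ to complete itself, placing its center outside the interior of $G_d$. The opposite coloring yields the symmetric case. The main obstacle I foresee is the parity check that makes $H$ bipartite across the joining edges; once that short cycle-length argument is nailed down, everything else follows mechanically.
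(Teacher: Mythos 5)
Your proposal is correct, and it rests on exactly the same local observation as the paper's proof: every internal vertex of $G_d$ has degree three, so for any edge between two such vertices exactly one endpoint must be the center of the $4$-star covering that edge. Where you differ is in how you globalize this. The paper propagates the constraint along ancestor/descendant chains within each $3$-binary tree (a center at depth $i$ forces centers at depths $i\pm 2, i\pm 4,\ldots$), reduces to two ``scenarios'' per tree depending on whether the root is a center, and then links the two trees through one joining edge with a case analysis on the parity of $d$. You instead observe that the center-indicator function is a proper $2$-coloring of the induced subgraph $H$ on internal vertices, and that $H$ is connected and bipartite with all top leaf-parents on one side and all bottom leaf-parents on the other; connectedness plus bipartiteness immediately yields exactly two colorings related by a global flip. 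Your route buys a cleaner ``exactly one of two cases'' conclusion without the explicit parity-of-$d$ split, at the cost of the short cycle-parity verification across the joining edges (which you correctly identify as the one point needing care: cycles use $0$ or $2$ joining edges, and same-depth tree paths have even length, so $H$ is bipartite). One minor point to keep consistent with the paper: when a leaf-parent is not a center, the $4$-star covering its shared edge is centered at the leaf itself, which is formally a vertex of $G_d$ but is identified with a vertex of another gadget in the final construction; your hedge ``outside the interior of $G_d$'' matches the paper's intended (equally informal) reading, so this is a wording issue rather than a gap.
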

\begin{proof}
Let $G$ be a graph which contains a share-respecting copy of $G_d$ along with an edge partition $P$ of $G$ into $4$-stars. Note that every vertex in $G_d$ has degree $3$, even the leaves in $G_d$ are have two more shared edges in $G_d$. If an internal vertex $x$ in a 3-Binary Tree is the center of a $4$-star in $P$ (see figure~\ref{3BinTreeProofExample}) then its parent $y$ cannot be the center of any $4$-star in $P$ because its degree has been reduced to $2$. This means that $z$ must be the center of a $4$-star in $P$ to cover the edge $(y,z)$.  Similarly, if $x$ was not the center of a 4-star then $y$ must be the center of a 4-star to cover the edge $(x,y)$. 

\begin{figure}[h]

\begin{center}

 \label{3BinTreeProofExample}
\begin{picture}(43,46)(0,-46)
\put(0,-46){\framebox(43,46){}}
\gasset{Nw=4.0,Nh=4.0,Nmr=2.0,Nfill=y,fillcolor=Black}
\node[fillcolor=Cyan,NLangle=0.0,Nw=5.3,Nh=5.3,Nmr=2.65](n0)(20.0,-28.0){x}

\node(n1)(12.0,-40.0){}

\node(n2)(24.0,-40.0){}

\node[NLangle=0.0,NLdist=3.8](n3)(28.0,-16.0){y}

\node[fillcolor=Blue,NLangle=0.0,Nw=5.2,Nh=5.3,Nmr=2.6](n4)(36.0,-4.0){z}

\node[fillgray=1.0](n5)(32.0,-28.0){}

\gasset{AHnb=0}
\drawedge(n4,n3){}

\drawedge(n3,n5){}

\drawedge[linecolor=Cyan,linewidth=0.23](n3,n0){ }

\drawedge[linecolor=Cyan,linewidth=0.25](n0,n2){ }

\drawedge[linecolor=Cyan,linewidth=0.25](n0,n1){ }
\end{picture}
\caption{$x$ is the center of a $4$-star, therefore $y$ cannot be the center of a $4$-star.  In any partition the edge $(y,z)$ must be covered by a $4$-star centered at $z$.}

\end{center}

\end{figure}
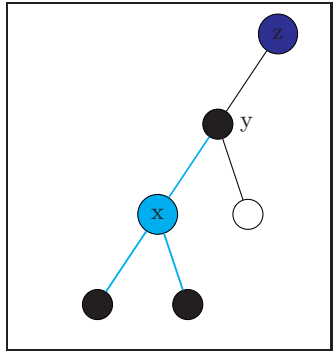

Now the pattern becomes evident: the parent of $z$ cannot be the center of a $4$-star so $z$'s grandparent must be the center of a $4$-star, and so on.
Therefore, in any edge partition, if there is a $4$-star centered at vertex $v$ at depth $i$, then the ancestors of $v$ at depths $i-2,i-4,...$ as well as the descendents at depths $i+2,i+4,\ldots$ must {\em all} be centers of 4-stars as well.  

Consider the root of the $3$-Binary Tree at depth $0$, there are only two possible scenarios.  Scenario 1, the root is the center of a $4$-star and all the vertices (descendents) at depths $0,2,4,\ldots$ in that $3$-Binary Tree must also be the centers of $4$-stars. Scenario 2, the root is not the center of a $4$-star and all the vertices (descendents) at depths $1,3...$ must be centers of $4$-stars.  

By construction of $G_d$ there must be exactly three edges between the top and bottom $3$-Binary Trees in $G_D$. Pick one such edge $(u,v)$, any edge partition of $G$ must use the edge $(u,v)$ so either $u$ or $v$ must be the center of a 4-star.  Without loss of generality assume that $u$ is the center of a $4$-star and that $u$ is in the bottom $3$-Binary Tree.  Notice that both $u$ and $v$ are at depth $d-1$ in their respective trees.  Assume that $d$ is odd (the proof is similar for $d$ even), then in the bottom tree we are in Scenario 1, but in the top tree we are in Scenario 2.  All shared edges belonging to the bottom tree of $G_d$ are contained in $4$-stars centered at depth $d-1$, while no shared edges from the top tree of $G_d$ can be contained in $4$-stars centered at depth $d-1$. \QED
\end{proof}

\end{document}